\documentclass[twocolumn,superscriptaddress]{revtex4-2}

\usepackage{amsmath}
\usepackage{amssymb}
\usepackage{amsthm}
\usepackage{mathtools}
\usepackage{graphicx}
\usepackage{color}
\usepackage{hyperref}
\usepackage{algorithm}
\usepackage{algpseudocode}
\usepackage{tabularx}
\usepackage{bm}

\newcommand{\be}{\begin{equation}}
\newcommand{\ee}{\end{equation}}
\newcommand{\RR}{\mathbb{R}}

\newcommand{\ZZ}{\mathbb{Z}}
\newcommand{\OO}[1]{\mathcal{O}(#1)}
\newcommand{\paren}[1]{\left( #1 \right)}
\newcommand{\brak}[1]{\left[ #1 \right]}
\newcommand{\cbrak}[1]{\left\{ #1 \right\}}
\newcommand{\abs}[1]{\left| #1 \right|}
\newcommand{\pd}[2]{\frac{\partial#1}{\partial#2}}

\newcommand{\wb}[1]{\overline{#1}}

\newcommand\norm[1]{\left\lVert#1\right\rVert}

\renewcommand{\Re}{\operatorname{Re}}
\renewcommand{\Im}{\operatorname{Im}}

\DeclareMathOperator\Tr{Tr}

\newcommand{\intbz}{\int_{\text{BZ}}}
\newcommand{\bz}{\text{BZ}}
\newcommand{\ibz}{W}
\newcommand{\xtrial}[1]{#1_\text{trial}}
\newcommand{\xtest}[1]{#1_\text{test}}
\newcommand{\vect}[1]{\bm{#1}}

\newtheorem{theorem}{Theorem}
\newtheorem{lem}{Lemma}

\newcommand{\CCQ}{Center for Computational Quantum Physics, Flatiron Institute,
162 5th Avenue, New York, NY 10010, USA}
\newcommand{\CCM}{Center for Computational Mathematics, Flatiron Institute, 162
5th Avenue, New York, NY 10010, USA}


\makeatletter
\newcommand{\multiline}[1]{%
  \begin{tabularx}{\dimexpr\linewidth-\ALG@thistlm}[t]{@{}X@{}}
    #1
  \end{tabularx}
}
\makeatother

\begin{document}

\title{Automatic, high-order, and adaptive algorithms for Brillouin zone integration}

\author{Jason Kaye}
\email{jkaye@flatironinstitute.org}
\affiliation{\CCQ}
\affiliation{\CCM}

\author{Sophie Beck}
\email{sbeck@flatironinstitute.org}
\affiliation{\CCQ}

\author{Alex Barnett}
\affiliation{\CCM}

\author{Lorenzo Van Mu\~noz}
\affiliation{Department of Physics, Massachusetts Institute of Technology, 77
Massachusetts Avenue, Cambridge, MA 02139, USA}

\author{Olivier Parcollet}
\affiliation{\CCQ}
\affiliation{Universit\'e Paris-Saclay, CNRS, CEA, Institut de Physique Th\'eorique, 91191,
   Gif-sur-Yvette, France}

\begin{abstract}
  We present efficient methods for Brillouin zone integration with a non-zero
  but possibly very small broadening factor $\eta$, 
  focusing on cases in which downfolded Hamiltonians can be evaluated
  efficiently using Wannier interpolation.
  We describe robust, high-order accurate algorithms automating convergence to a
  user-specified error tolerance $\varepsilon$, emphasizing an efficient
  computational scaling
  with respect to $\eta$. After analyzing
  the standard equispaced integration method, applicable in the case
  of large broadening, we
  describe a simple iterated adaptive integration algorithm effective in the
  small $\eta$ regime. Its computational cost
  scales as $\OO{\log^3(\eta^{-1})}$ as $\eta \to 0^+$ in three
  dimensions, as opposed to $\OO{\eta^{-3}}$ for equispaced
  integration. We argue that, by contrast, tree-based
  adaptive integration methods scale only as $\OO{\log(\eta^{-1})/\eta^{2}}$ for
  typical Brillouin zone integrals. In
  addition to its favorable scaling, the iterated adaptive algorithm is straightforward to
  implement, particularly for integration on the irreducible Brillouin
  zone, for which it avoids the tetrahedral meshes required for tree-based
  schemes.
  We illustrate the algorithms by calculating the spectral function of
  SrVO$_3$ with broadening on the meV scale.
\end{abstract}

\maketitle

\section{Introduction} \label{sec:intro}

Brillouin zone (BZ) integration is a fundamental operation in
electronic structure calculations for periodic solids, and requires
algorithms capable of accurately computing observables of physical
systems with widely differing properties~\cite{Kratzer/Neugebauer:2019}. In
some cases, BZ integration arises as a post-processing step, for example in the
calculation of the spectral function and optical response functions. In
others, it represents a fundamental step of an iteration procedure,
and must be carried out repeatedly with controlled accuracy. Examples of the
latter include the self-consistent
field calculation of ground state properties within density functional
theory~\cite{Kohn:1999}, and the self-consistency loop in
dynamical mean-field theory (DMFT)~\cite{Georges:1996}. In both cases, BZ integration can represent a
computational bottleneck or a significant source of error, particularly when
high resolution is required to resolve fine features in reciprocal space.

In this article, we focus on many-body Green's function methods, in
which the BZ integral acquires a system- and temperature-dependent
scattering rate which largely determines the difficulty of integration.
Although the ideas discussed here may be
applied to other types of BZ integrals, we use
the single
particle retarded Green's function as a concrete prototypical example. It is
given by 
\begin{equation} \label{eq:gfun}
  G(\omega) = \intbz d^d \vect{k} \Tr \brak{\paren{\omega+\mu-H(\vect{k})-\Sigma(\omega)}^{-1}},
\end{equation}
where BZ denotes the Brillouin zone, $\vect{k}$ is a
reciprocal space vector, $\omega$ is a frequency variable (we
take $\hbar = 1$), $\mu$ is the chemical
potential, $H(\vect{k})$ is a Hermitian Hamiltonian matrix, $\Sigma(\omega)$ is a complex-valued
self-energy matrix, 
the dimension $d\in\{1,2,3\}$, and
scalars are understood to be multiples of the identity matrix. Although in general $\Sigma =
\Sigma(\vect{k},\omega)$, we assume here that the self-energy
is local in space,
$\Sigma =
\Sigma(\omega)$, a situation encountered in a variety of applications, in
particular DMFT. The generalization to the case of
non-local self-energies is problem-dependent and outside the scope of this work. 
The $\vect{k}$-integrated spectral function is
obtained from \eqref{eq:gfun} as
\begin{equation} \label{eq:dos}
  A(\omega) = -\frac{1}{\pi} \Im G(\omega).
\end{equation}
To simplify the discussion we work with a special case of \eqref{eq:gfun}, with the self-energy $\Sigma(\omega) =
-i \eta$ representing a constant scattering rate:
\begin{equation} \label{eq:bzint}
  G(\omega) = \intbz d^d \vect{k} \Tr \brak{\paren{\omega-H(\vect{k})+i\eta}^{-1}}.
\end{equation}
Here, $\mu$ has been absorbed into $H(\vect{k})$. We emphasize, however, that our
algorithms can be straightforwardly applied to the more general case.

\begin{figure*}
  \includegraphics[width=0.25\textwidth,height=0.22\textwidth,keepaspectratio]{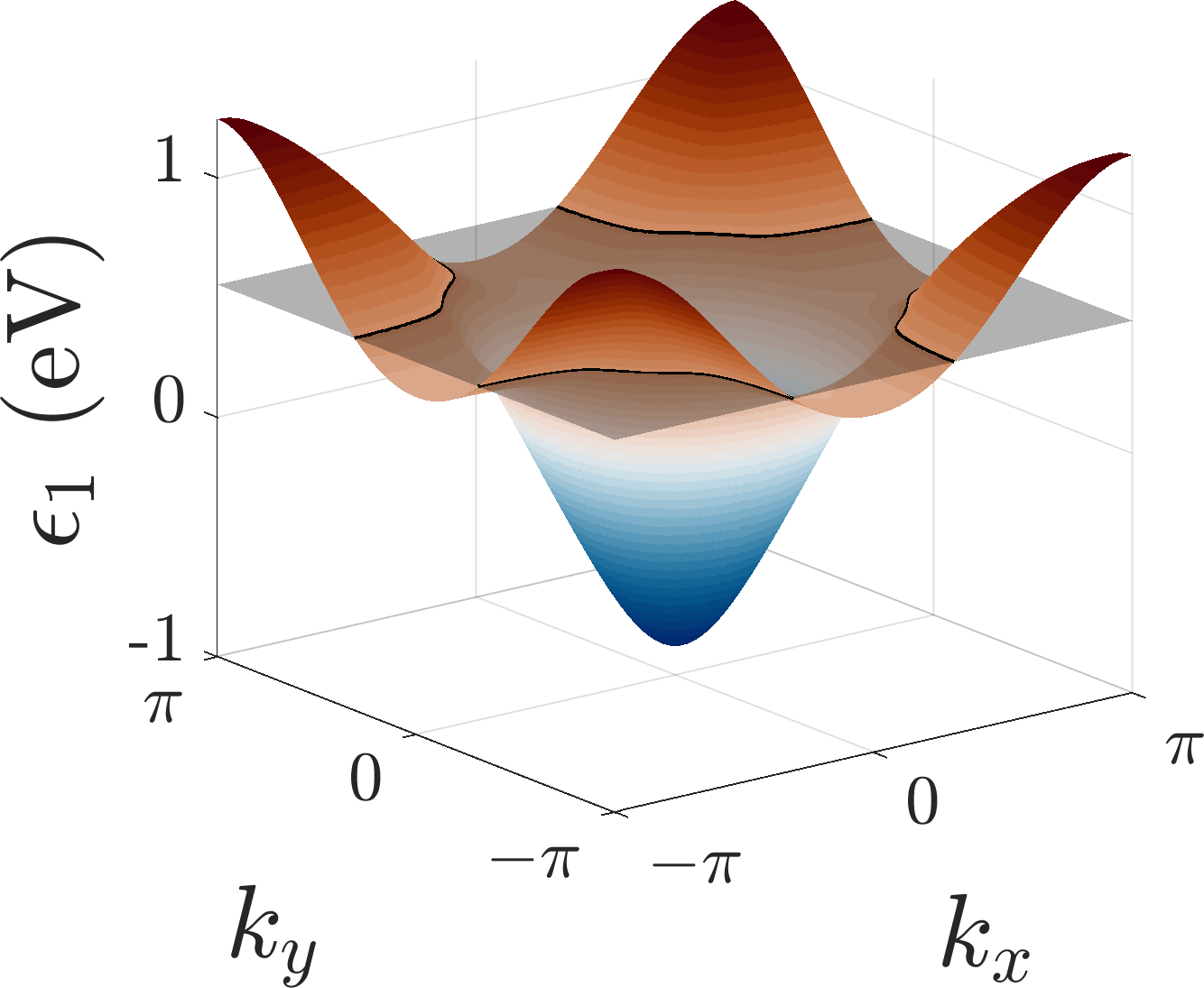}%
  \hfill
  \includegraphics[width=0.25\textwidth,height=0.22\textwidth,keepaspectratio]{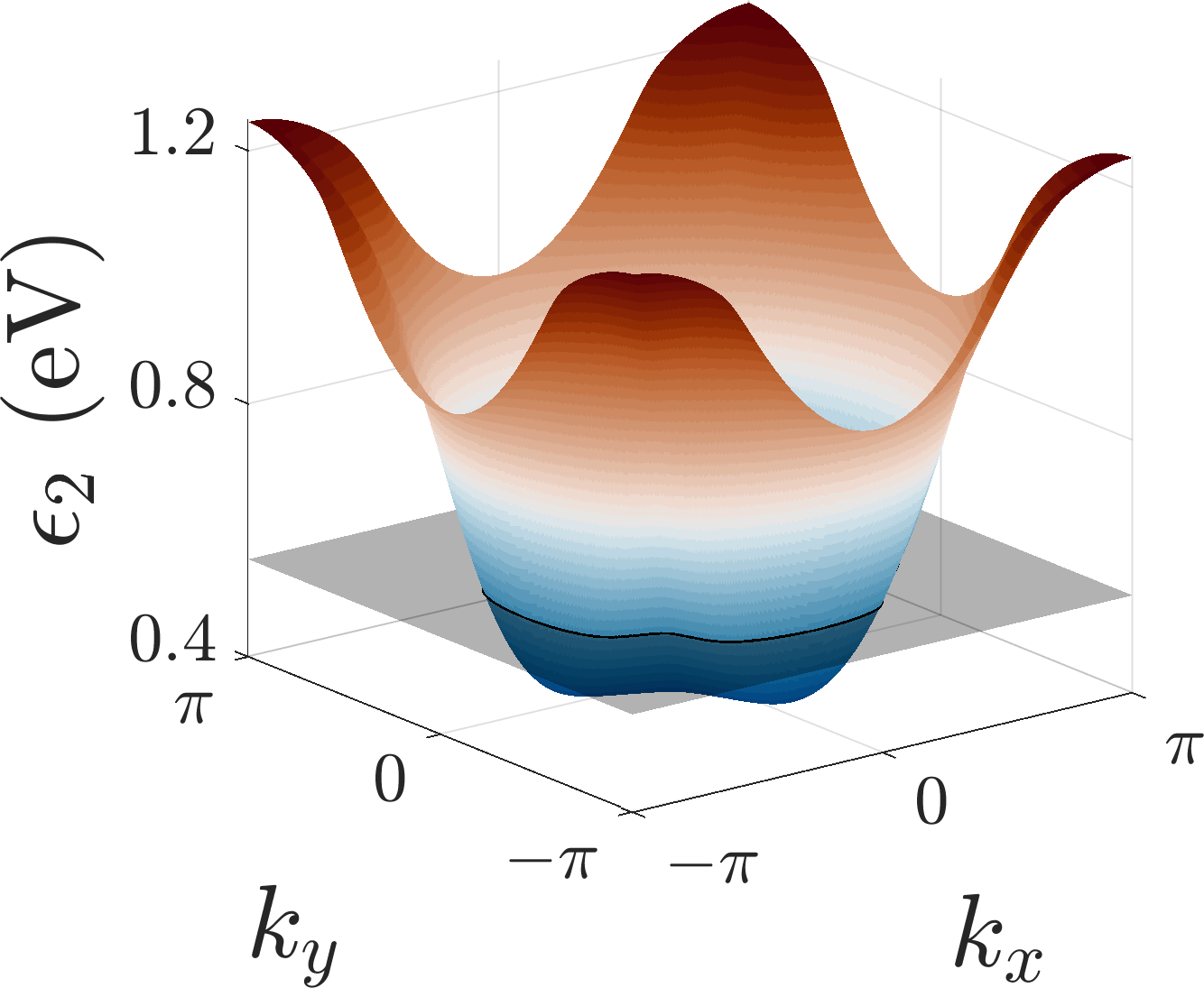}%
  \hfill
  \includegraphics[width=0.25\textwidth,height=0.22\textwidth,keepaspectratio]{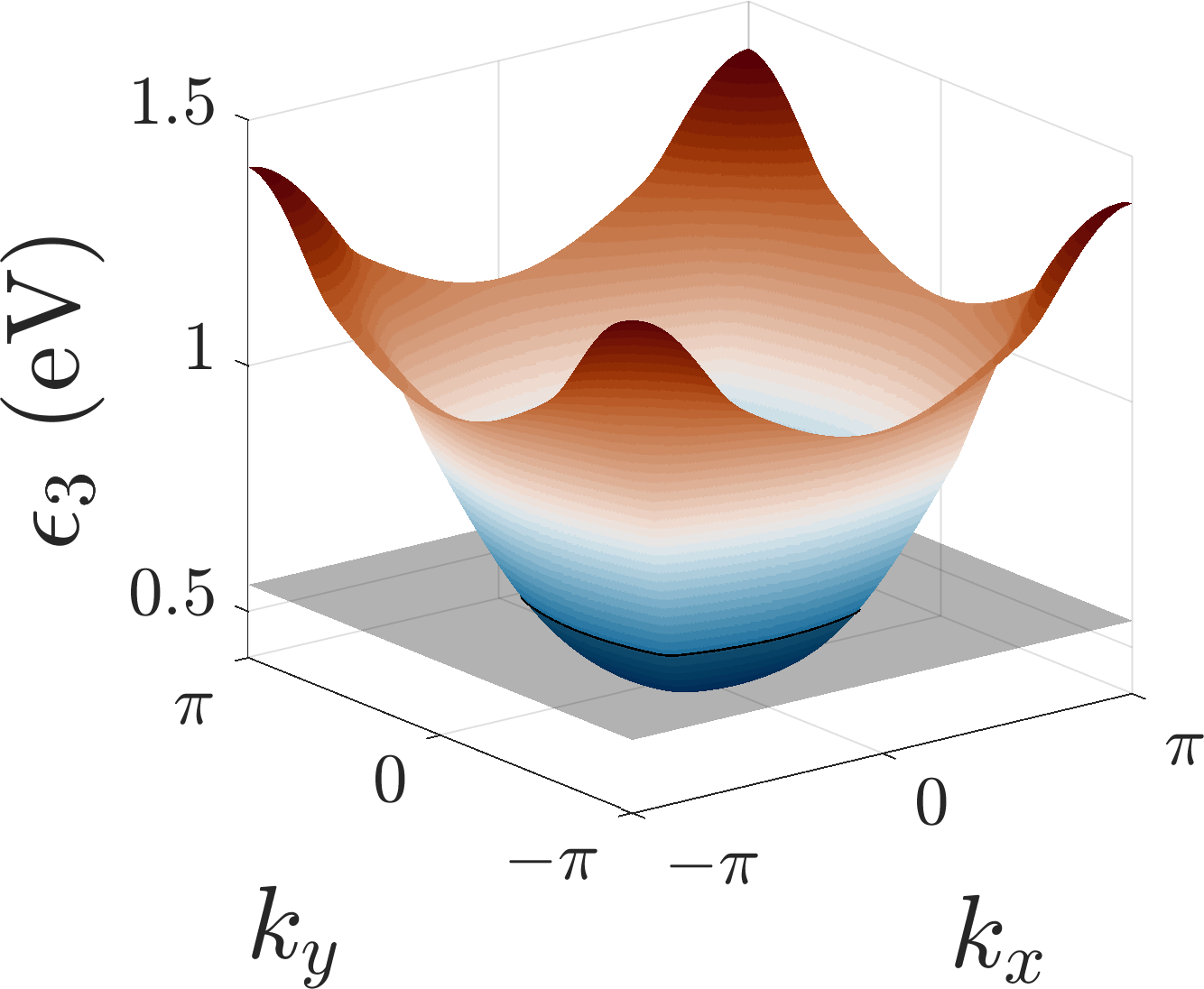}%
  \hfill
  \includegraphics[width=0.25\textwidth,height=0.22\textwidth,keepaspectratio]{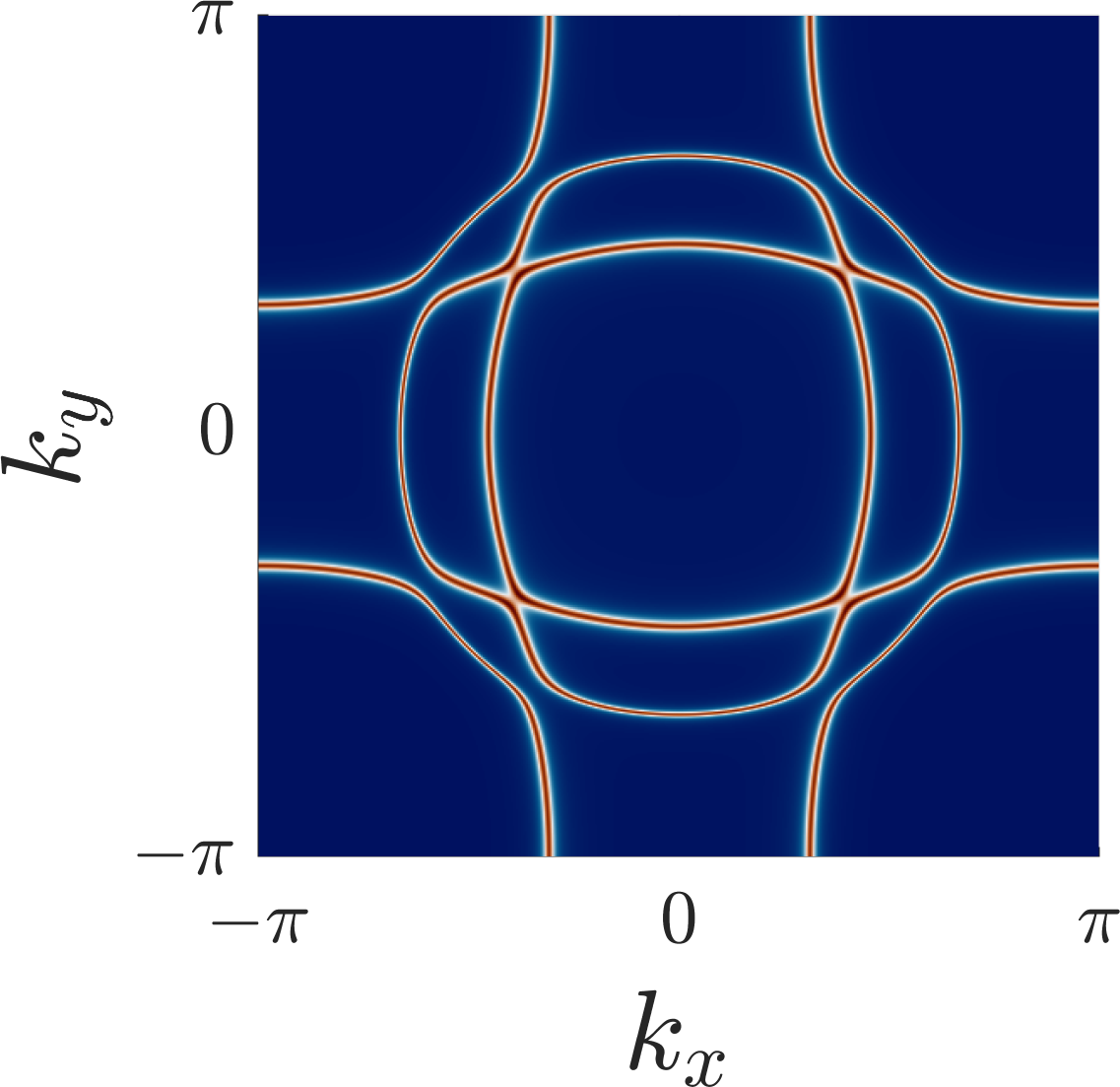}
  \caption{The first three panels show the energy sheets $\epsilon_1$,
    $\epsilon_2$, and $\epsilon_3$ (eigenvalues of $H(\vect{k})$) for the
    SrVO$_3$ example discussed in Section~\ref{sec:results},
    with fixed $k_z = 1.9$. A plane of
  constant $\omega = 0.55$ eV is also shown. The last panel
  shows a color plot of the imaginary part of the integrand in
  \eqref{eq:bzint} with this $\omega$, and $\eta = 0.01$ eV. The integrand
  concentrates in a region of width $\OO{\eta}$ along
  the singular set
  defined by
  $\det\paren{\omega - H(k_x,k_y,1.9)} = \prod_{i=1}^3 \paren{\omega
    - \epsilon_i(k_x,k_y,1.9)} = 0$,
  i.e., the intersections of the plane of constant
  $\omega$ with the three sheets.}
  \label{fig:structure}
\end{figure*}

In much of the literature, quantities of interest are
considered in the limit of zero scattering (i.e., the non-interacting density
of states is obtained from the $\eta = 0^+$ limit of \eqref{eq:bzint}).
In the presence of a self-energy, the scattering rate $\eta$ is
non-zero, but may be small.
For example, a local Fermi liquid has
the asymptotic scaling~\cite{Berthod_et_al:2013}
\begin{equation} \label{eq:fermiliquid}
  \Im\Sigma(\omega, T) \;\sim\; -\brak{\omega^2 + (\pi k_{\text{B}}T)^2}
\end{equation}
as $\omega$, $T \to 0$, with $T$ the temperature, yielding a vanishing but
non-zero scattering rate.
In this case we obtain integrands in \eqref{eq:bzint} with highly localized features at the scale $\OO{\eta}$ near the surface on which
$\det\paren{\omega - H(\vect{k})} = 0$ (e.g.\ the Fermi surface when $\omega =
0$), as illustrated in Fig.~\ref{fig:structure}. In order to compute \eqref{eq:bzint} accurately, these
localized features must be resolved, and adaptive methods become
essential.
In this article we present automatic algorithms to compute BZ
integrals like \eqref{eq:bzint}, with possibly very small $\eta > 0$, to
a user-specified error tolerance $\varepsilon$, with
a cost scaling mildly with respect to
$\eta$ and $\varepsilon$. Mild scaling with respect to $\varepsilon$ is achieved
through the use of high-order accurate methods. The term 
``automatic'' specifies that convergence to error $\varepsilon$ is
carried out by the algorithm itself, not by the user.
Indeed, within self-consistency loops,
it is
crucial that accurate results are returned reliably
without user intervention.

An important distinction in BZ integration algorithms concerns the
representation of $H(\vect{k})$. In many practical applications, the relevant physics can be extracted
from a projection of the ab initio Hamiltonian onto a low-energy
subspace---a process referred to as downfolding---so that $H(\vect{k})$
in \eqref{eq:bzint} becomes a small matrix. $H(\vect{k})$ is then
related to a tight-binding Hamiltonian $H_{\vect{R}}$ by the Fourier series
\begin{equation} \label{eq:fs}
  H(\vect{k}) = \frac{1}{(2\pi)^d} \sum_{\vect{R}} e^{i \vect{k} \cdot \vect{R}} H_{\vect{R}}.
\end{equation}
Here, $\vect{R}$ labels the real space lattice vectors of a tight-binding
model, and mathematically the $H_{\vect{R}}$ are simply the matrix-valued
Fourier coefficients of the periodic function $H(\vect{k})$.
The gauge freedom in the Hamiltonian can be
chosen, using (maximally) localized Wannier functions, to make $H_{\vect{R}}$
decay as rapidly as possible~\cite{Marzari/Vanderbilt:1997,
  Souza/Marzari/Vanderbilt:2001, wannier90}.
$H(\vect{k})$ can then be
approximated
efficiently using a truncation of \eqref{eq:fs}. This procedure is
typically referred to as Wannier interpolation~\cite{Yates_et_al:2007}.
We will
not consider cases in which $H(\vect{k})$ is a large matrix, 
or cases in which Wannier interpolation is not applicable, as other
algorithmic considerations then take precedence.

The literature on BZ integration is extensive, and we mention a few related
approaches here.
A standard method is to sum over a grid of equispaced nodes (such as the Monkhorst--Pack grid~\cite{Monkhorst/Pack:1976}), which we refer to
as the $d$-dimensional
periodic trapezoidal rule (PTR). Despite its spectral accuracy,
to be discussed
shortly,
its obvious drawback is its inefficiency for
small broadening $\eta$. A popular approach for the $\eta = 0^+$ limit of
\eqref{eq:bzint} is the linear tetrahedron method (LTM)
\cite{jepson71,lehmann72,blochl94,kawamura14,ghim22}, which uses a
tetrahedralization of the BZ with piecewise linear approximation of band surfaces
to obtain a semi-analytical result.
There are relatively few works describing the extension of the method to
$\eta > 0$, or more generally to
\eqref{eq:gfun}~\cite{haule10}. The LTM is low-order accurate, and further research is necessary
to clarify its robustness, and its scaling with $\eta$.
We give a brief discussion of the challenges and open questions associated with the LTM in Appendix \ref{app:tm}.
We
also mention smearing methods \cite{ho82,methfessel89}, used for the $\eta = 0^+$ case, which avoid
computing distributional integrals appearing in the $\eta=0^+$ limit
by effectively adding a
small broadening $\eta$ and then applying a
uniform integration scheme. Adaptive
smearing methods involve specific prescriptions for possibly
$\vect{k}$-dependent broadening
parameters \cite{bjorkman11,Yates_et_al:2007}; note that
this is distinct from adaptive integration in the sense used
in this article. Several other integration schemes have been proposed recently
\cite{wang21,chen22,duchemin22},
but do not primarily address the class of problems considered here.

A few low-order accurate adaptive integration algorithms have been proposed. Refs.~\cite{henk01} and \cite{assmann16} describe tree-based adaptive
integration methods, which are discussed below. Refs.~\cite{henk01} and
\cite{bruno97} describe iterated adaptive integration algorithms which are
close to that proposed here, but they have not gained
widespread adoption. Our high-order accurate version of
this approach, which includes critical performance optimizations,
significantly improves its competitiveness.

\begin{figure*}
  \includegraphics[width=0.33\textwidth,height=0.33\textwidth,keepaspectratio]{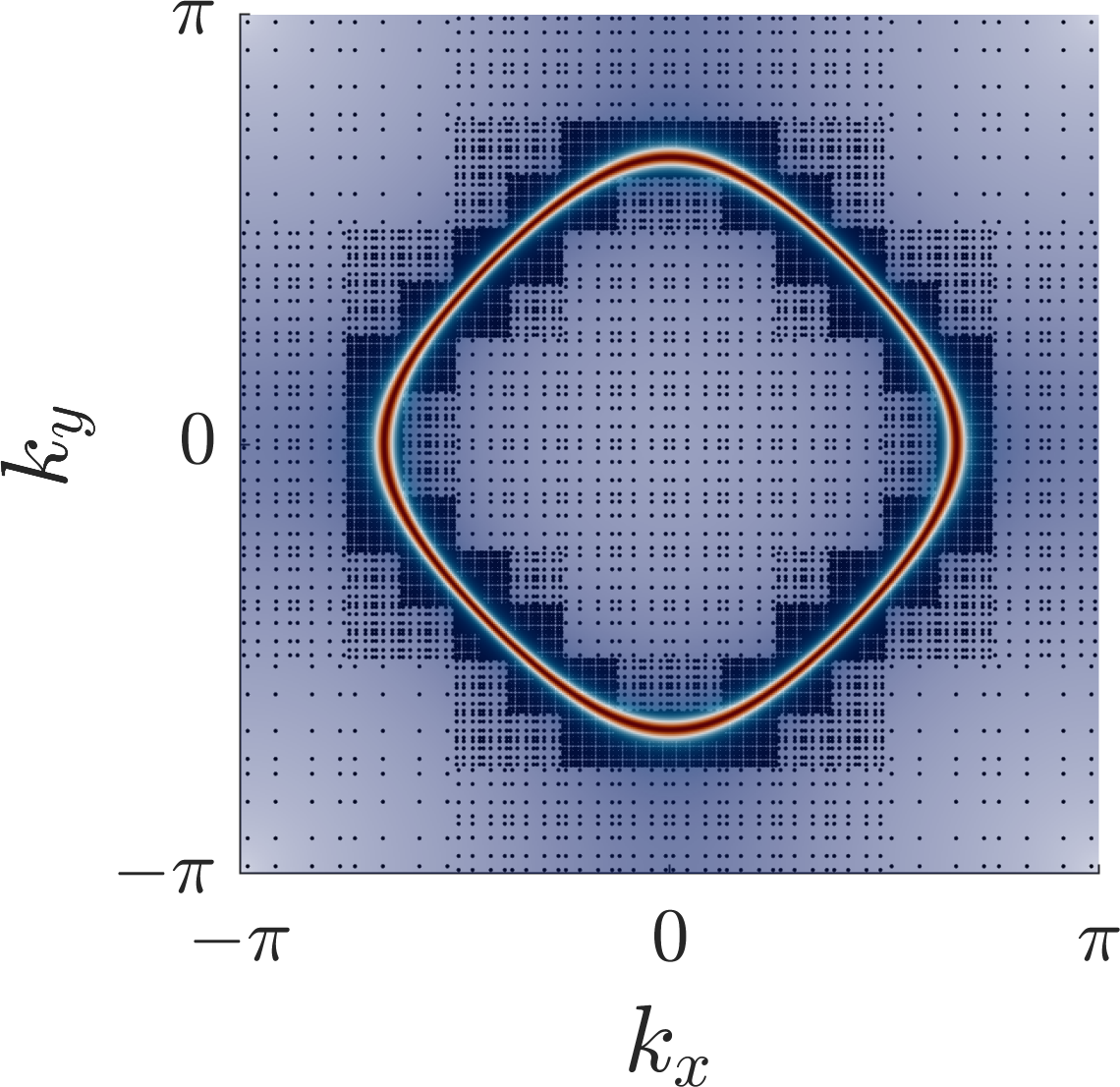}%
  \hfill
  \includegraphics[width=0.33\textwidth,height=0.33\textwidth,keepaspectratio]{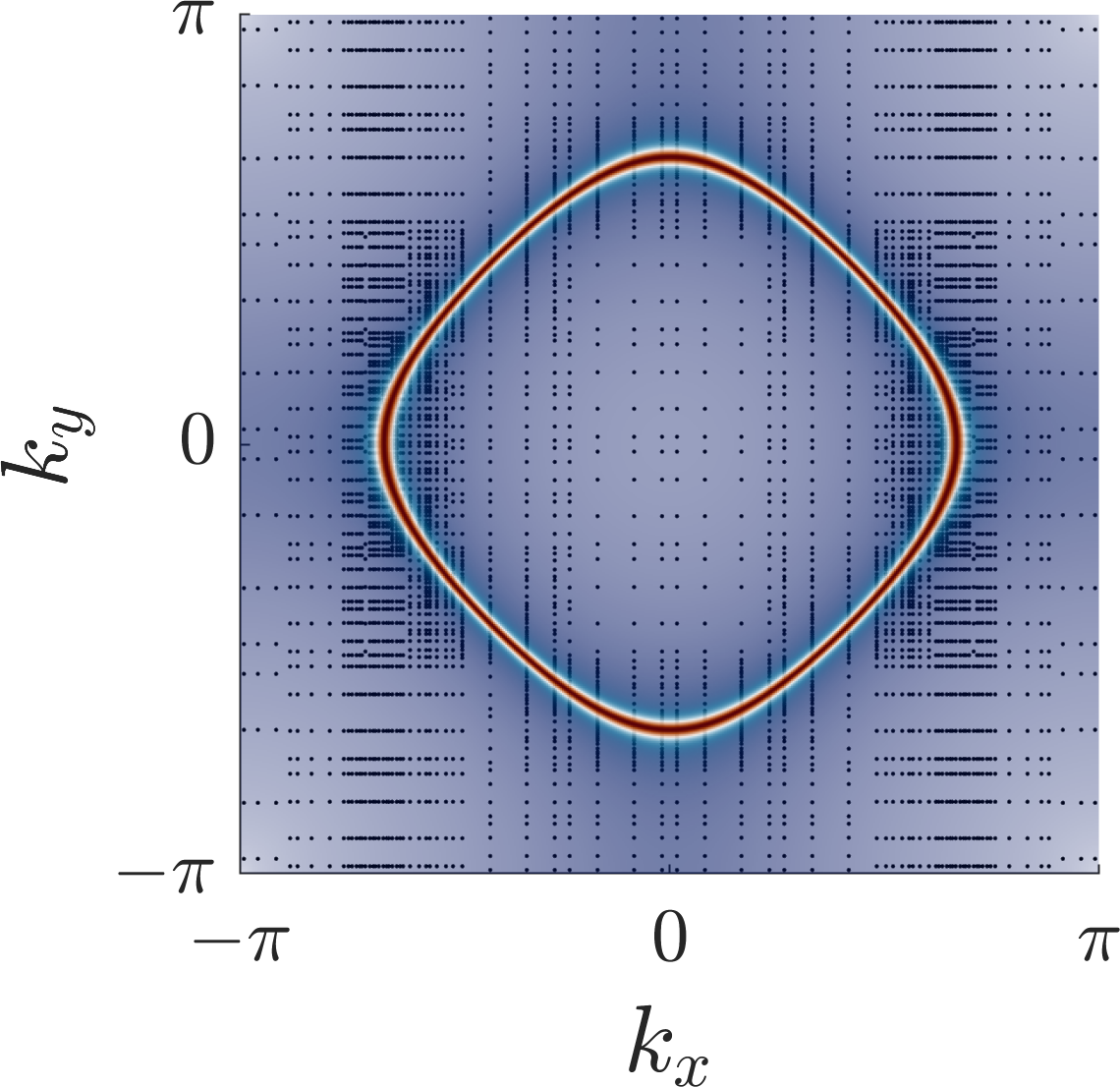}%
  \hfill
  \includegraphics[width=0.33\textwidth,height=0.33\textwidth,keepaspectratio]{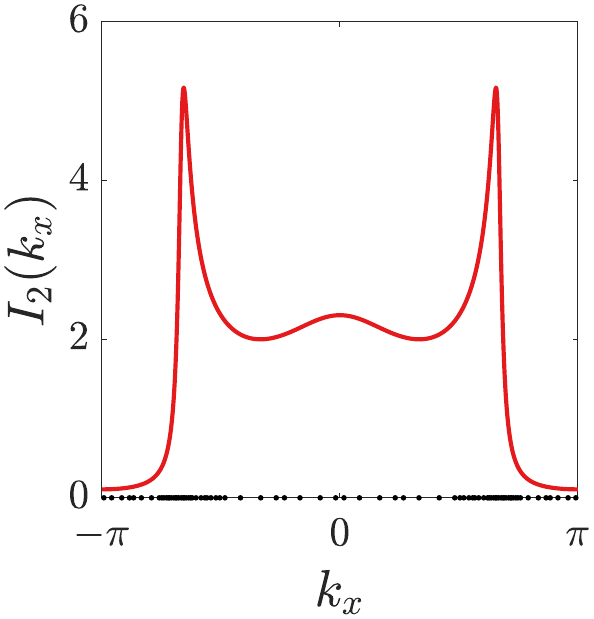}
  \caption{
    Adaptive integration grids for the spectral function \eqref{eq:dos} with $G$ given by \eqref{eq:bzint} for
  $d = 2$, $H(k_x,k_y) = \cos \paren{k_x} + \cos \paren{k_y}$, $\omega = 0.5$ eV,
  $\eta = 0.05$ eV, and specified error tolerance $\varepsilon = 10^{-3}$
  (in units of the spectral function). The TAI grid
  (left) refines along the full nearly-singular curve, yielding $\OO{\log(\eta^{-1})/\eta}$
  points (or $\OO{\log(\eta^{-1})/\eta^2}$ in 3D). The integrand is shown as a
  color plot underneath the grid. By contrast, the IAI grid (middle) only refines
  into two nearly-singular points in $k_x$ (shown in the right panel), and then for each $k_x$,
  into at most two nearly-singular points in $k_y$, yielding
  $\OO{\log^2(\eta^{-1})}$ points in total (or $\OO{\log^3(\eta^{-1})}$ in
  3D). The right panel shows the outer integrand $I_2(k_x) =
  -\frac{1}{\pi} \Im \int_{-\pi}^{\pi} dk_y \, \paren{\omega-H(k_x,k_y) + i \eta}^{-1}$, with its adaptive integration grid in $k_x$ shown as black
  dots.}
  \label{fig:grids}
\end{figure*}

The main contributions of this work are as follows:
\begin{itemize}
  \item We present a simple, high-order accurate \textit{iterated} adaptive
    integration (IAI) method, relying only on the use of a one-dimensional
    (1D) adaptive integration algorithm. It has $\OO{\log^d(1/\eta)}$ computational
complexity as $\eta \to 0^+$. This is to be contrasted
with the $\OO{\eta^{-d}}$ scaling of uniform
    integration methods like the PTR, which require an $\OO{\eta}$ grid spacing to resolve
    features of width $\eta$.
  \item We observe that \textit{tree-based}
adaptive integration (TAI), another common approach \cite{henk01,assmann16},
scales
    only as $\OO{\log(\eta^{-1})/\eta^{d-1}}$, and is therefore asymptotically slower than IAI
for small $\eta$. An example of the difference between the grids produced by TAI and IAI is
    given in Fig.~\ref{fig:grids} for the simple 2D example
    $H(k_x,k_y) = \cos(k_x) + \cos(k_y)$.
    Whereas TAI requires building
    adaptive tetrahedral meshes for calculations on the irreducible Brillouin
    zone (IBZ), IAI can be
    used directly on the IBZ with minimal modification.
  \item We include an in-depth discussion of the PTR, emphasizing its
    high-order accuracy. Despite its $\OO{\eta^{-d}}$ scaling, the PTR has significant
    advantages in the large $\eta$ regime. We show how
    the PTR can be
    used in an automated manner, and conclude that a
hybrid approach, using the PTR for large $\eta$ and IAI for small
$\eta$, provides an efficient and robust solution covering the full range of cases.
\end{itemize}
In addition, we present a high-order adaptive frequency interpolation method
    to represent the spectral function. This algorithm
    automatically generates an efficient grid to resolve localized features
    like Van Hove singularities, thereby minimizing the number of BZ
    integrals which need to be computed to obtain $A(\omega)$.
We demonstrate our method by calculating the
    spectral function of SrVO$_3$ with constant scattering rates
    $\eta$ as small as $1$ meV to several digits of accuracy.

This article is organized as follows. We begin in Section~\ref{sec:ptr}
with a discussion of the PTR and its advantages when $\eta$ is not too
small. We review high-order 1D adaptive integration in
Section~\ref{sec:adapintgr}, and compare the IAI and TAI approaches to
higher-dimensional integration in Section~\ref{sec:adapintnd},
concluding that IAI is superior for BZ integration.
We describe our
automatic adaptive frequency sampling method in Section~\ref{sec:adapinterp}, and demonstrate the performance of the full scheme
for a calculation of the spectral function of SrVO$_3$ in 
Section~\ref{sec:results}. Application areas and future
directions of research are discussed in Section~\ref{sec:conclusion}. We
also include several appendices: Appendix
\ref{app:ptrtheory} presents analysis of the PTR for BZ
integrals, Appendix \ref{app:ibz} gives
details on the use of the PTR for integration in the IBZ, 
Appendix \ref{app:ptrheval} discusses the efficient
implementation of the PTR, Appendix \ref{app:autoptralt} gives an alternative
algorithm for automatic refinement of the PTR to that presented in the main
text, and Appendix \ref{app:tm} discusses the LTM.

\subsection*{Terminology and notation}

We briefly fix some standard notation and terminology, and refer to Ref.
\cite[Chap. 3]{kaxiras03} for useful background.
For concreteness, we restrict here to the 3D case.

The reciprocal lattice vectors $\vect{b}_1,\vect{b}_2,\vect{b}_3 \in \RR^3$
are related to the primitive lattice vectors
$\vect{a}_1,\vect{a}_2,\vect{a}_3 \in
\RR^3$ by $\vect{a}_i \cdot \vect{b}_j = 2 \pi
\delta_{ij}$,
or equivalently, writing $A = \begin{bmatrix} \vect{a}_1 &
\vect{a}_2 & \vect{a}_3 \end{bmatrix}$ and $B = \begin{bmatrix}
  \vect{b}_1 &
\vect{b}_2 & \vect{b}_3 \end{bmatrix}$,
by $A^T B = 2\pi I$, with $I$ the identity
matrix. All $\vect{k}$-dependent quantities, like $H(\vect{k})$, are periodic
with respect to translation by the reciprocal lattice vectors, i.e.,
$H(\vect{k}+B\vect{n}) = H(\vect{k})$ with $\vect{n} \in \ZZ^3$.
The term ``Brillouin zone'' often refers to the first Brillouin zone, the cell in
the Voronoi decomposition of the reciprocal lattice containing the origin. However, this domain is in
general a complicated polytope, and therefore less convenient for
integration than the parallelpiped spanned by the reciprocal lattice
vectors.
Using the freedom afforded to us by the periodicity, we
take this as our primitive unit cell, and refer to it as the Brillouin zone:
\[\bz \equiv
\cbrak{\vect{k} = B\vect{\kappa}/2\pi \, | \, \vect{\kappa} \in
[-\pi,\pi]^3}.\]
By changing to the $\vect{\kappa}$ variables, we see that up to
multiplication by a Jacobian factor $\abs{\det B}/(2\pi)^3$ we can assume
without loss of generality that $\bz = [-\pi,\pi]^3$, or equivalently,
that $A = I$, $B = 2\pi I$, and $\vect{k} = \vect{\kappa}$. For
simplicity of exposition, we make this assumption in the remainder of
the article, except in discussions of IBZ integration, for
which the specific point symmetries determine the integration domain.

The Fourier series representation \eqref{eq:fs} will be used to evaluate
$H(\vect{k})$ efficiently. Under the assumptions made above, 
$\vect{R}$ is an integer multi-index, $\vect{R} = (R_1,R_2,R_3) \in \ZZ^3$. Truncating the rapidly
converging Fourier
series to include $M$ modes per dimension, we obtain
\begin{equation} \label{eq:fstrunc}
  H(\vect{k}) \approx \frac{1}{(2\pi)^3} \sum_{R_1,R_2,R_3 = -M/2}^{M/2-1} e^{i
  \vect{k} \cdot \vect{R}} H_{\vect{R}},
\end{equation}
where for simplicity we have assumed $M$ is even.

The spectral function has units of (eV$\mbox{\AA}^3$)$^{-1}$, but we suppress
them throughout the text.

\section{Periodic trapezoidal rule} \label{sec:ptr}

As a starting point, we consider \eqref{eq:bzint} with $\eta$
not too small.
Although this case is less challenging than that of small $\eta$, the
two scenarios are often encountered side by side in practice, for example in the
presence of a self-energy $\Sigma(\omega)$ with $\abs{\Im \Sigma(\omega)}$ large for 
some $\omega$ and small for others. Taking the $\bz$ as specified
above, \eqref{eq:bzint} is the integral of a smooth, triply-periodic
function over a cube.

The standard tool in this case is the PTR. For
a $2\pi$-periodic function $f(x)$ of one variable, this is simply the
quadrature rule $\int_0^{2\pi} f(x) \, dx \approx \frac{2\pi}{N}
\sum_{n=0}^{N-1} f(2\pi n/N)$, where we have shifted to the interval
$[0,2\pi]$ for notational convenience. We emphasize that although the ordinary trapezoidal rule
for non-periodic functions is only second-order accurate, the PTR is
\textit{spectrally} accurate. In particular, the following theorem
describes the error for analytic functions~\cite[Thm 3.2]{trefethen14}
(see also \cite{Davis59}).
\begin{theorem} \label{thm:ptr}
  If $f(x)$ is $2\pi$-periodic and analytic in the the strip $\abs{\Im
    x} < a$, with $\abs{f(x)} \leq C$, then for any $N \geq 1$,
  \[\abs{\int_0^{2\pi} f(x) \, dx - \frac{2\pi}{N} \sum_{n=0}^{N-1} f(2\pi
  n/N)} \leq \frac{4 \pi C}{e^{aN}-1}.\]
\end{theorem}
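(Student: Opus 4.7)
My plan is to prove Theorem~\ref{thm:ptr} via the classical Fourier/aliasing argument, which is the standard route for error bounds on the PTR of analytic periodic integrands.

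First, I would expand $f$ in a Fourier series, $f(x) = \sum_{k\in\ZZ} \wh{f}_k \, e^{ikx}$, with coefficients $\wh{f}_k = \frac{1}{2\pi}\int_0^{2\pi} f(x) e^{-ikx}\,dx$. The integral is simply $\int_0^{2\pi} f(x)\,dx = 2\pi \wh{f}_0$. For the PTR, the key observation is the discrete orthogonality identity: $\frac{1}{N}\sum_{n=0}^{N-1} e^{ik \cdot 2\pi n/N}$ equals $1$ when $N \mid k$ and $0$ otherwise. Substituting the Fourier series into the quadrature sum and interchanging order (justified by absolute convergence, which will follow from the decay bound in the next step) gives the aliasing formula
\[
\frac{2\pi}{N}\sum_{n=0}^{N-1} f(2\pi n/N) \;=\; 2\pi \sum_{m \in \ZZ} \wh{f}_{mN}.
\]
Subtracting from the exact integral isolates the quadrature error as $-2\pi \sum_{m\neq 0} \wh{f}_{mN}$.

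Next I would estimate $|\wh{f}_k|$. For $k > 0$ I shift the contour of integration in the definition of $\wh{f}_k$ from $[0,2\pi]$ down to $[0,2\pi] - ib$ for any $0 < b < a$, using periodicity to cancel the vertical segments. Since $|f| \le B$ on the strip and $|e^{-ik(x-ib)}| = e^{-kb}$, this yields $|\wh{f}_k| \le B e^{-kb}$; sending $b \to a^-$ gives $|\wh{f}_k| \le B e^{-ak}$, and symmetrically $|\wh{f}_k| \le B e^{-a|k|}$ for all $k\in\ZZ$ (with the case $k=0$ trivial). Plugging this into the aliasing expression and summing the resulting geometric series,
\[
\abs{E_N} \;\le\; 2\pi \sum_{m \neq 0} B\, e^{-a|m|N} \;=\; 4\pi B \sum_{m=1}^{\infty} e^{-amN} \;=\; \frac{4\pi B}{e^{aN}-1},
\]
which is exactly the stated bound.

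The only point that requires a little care is the contour shift, specifically the cancellation of the two vertical segments at $x=0$ and $x=2\pi$, which relies on $f$ being analytic and $2\pi$-periodic in the closed strip up to any $b<a$; the supremum bound $|f| \le B$ on the open strip $\abs{\Im x} < a$ is then inherited by each $b < a$, and the final bound is recovered by taking $b \to a^-$. Everything else (discrete orthogonality, geometric summation, interchange of sum and integral) is routine given the exponential decay of the Fourier coefficients.
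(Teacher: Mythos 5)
The paper does not prove Theorem~\ref{thm:ptr}; it cites it directly from Trefethen and Weideman (reference \cite{trefethen14}, Theorem~3.2). Your proof is correct and is in fact essentially the proof given in that cited source: the aliasing identity $\frac{2\pi}{N}\sum_{n=0}^{N-1} f(2\pi n/N) = 2\pi\sum_{m\in\ZZ}\wh{f}_{mN}$ obtained from discrete orthogonality, combined with the exponential decay bound $\abs{\wh{f}_k}\le Be^{-a\abs{k}}$ from shifting the contour of the Fourier-coefficient integral within the strip of analyticity, and then summing the resulting geometric series. Your handling of the open-strip hypothesis by taking $b\to a^-$ after bounding with $b<a$ is the right way to close the one small gap, so the argument is complete.
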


\begin{figure}
  \centering
  \includegraphics[width=0.98\linewidth]{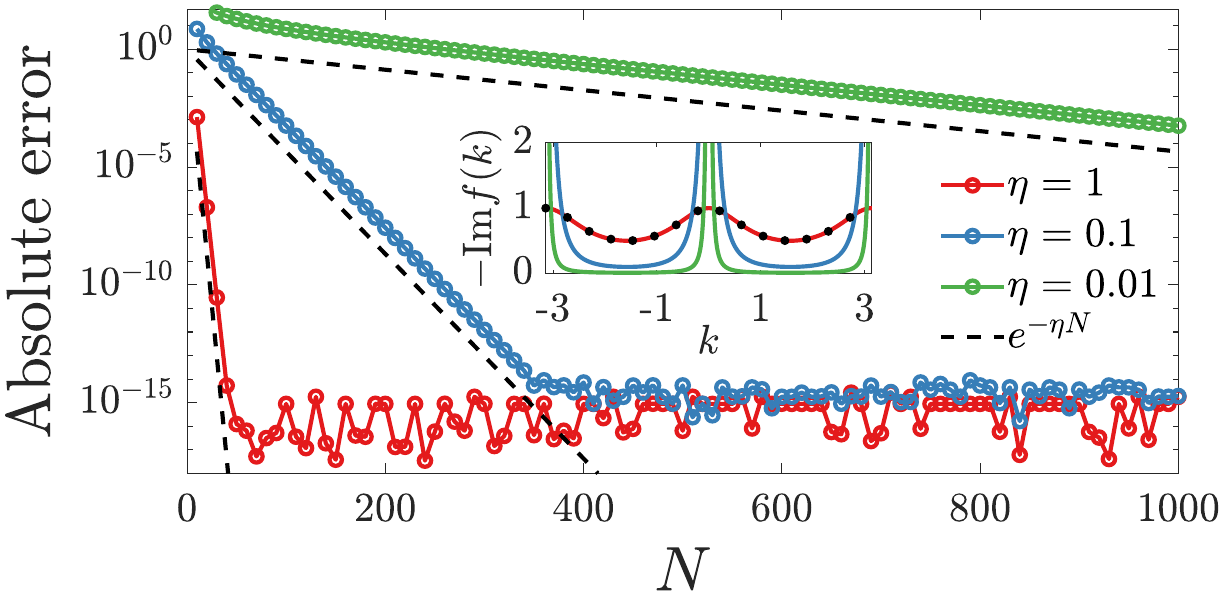}
  \caption{Error of the $N$-point PTR applied
  to $\int_{-\pi}^{\pi} \frac{dk}{\sin(k)+i \eta}$ with three
  values of $\eta$, along with a simple estimate of the convergence
  rate derived from Theorem~\ref{thm:ptr}. The negative imaginary part of the
  integrand $f(k)$ is shown in the inset. For $\eta = 1$,
  a relatively coarse discretization of $N = 15$ nodes (black dots) yields five-digit
  accuracy, and $N = 40$ yields full double-precision accuracy.
  }
  \label{fig:ptrex}
\end{figure}

Thus the exponential
rate of convergence is given by the distance from the
real axis to the closest singularity of $f$ in the complex plane. As an
example, in Fig.~\ref{fig:ptrex}, we plot the error of the PTR against
$N$ for the integral $\int_0^{2\pi} \frac{dk}{\sin(k)+i
\eta}$, a 1D case of \eqref{eq:bzint}, for several
choices of $\eta$. For large $\eta$, we observe exceptionally rapid
convergence, yielding high accuracy with a coarse discretization, but
for small $\eta$ localized features emerge and the convergence is slow.
Indeed, Theorem \ref{thm:ptr} predicts error scaling
close to
$\OO{e^{-\eta N}}$, since a first-order Taylor expansion implies
poles near $k = -i \eta, \pi+i\eta$.
Appendix~\ref{app:ptrtheory} makes this intuition rigorous
for certain integrals of the form \eqref{eq:bzint}.

For $d>1$, we can simply apply the PTR
dimension by dimension; for example, in 2D,
\begin{equation} \label{eq:ptr2d}
  \begin{multlined}
  \int_{[0,2\pi]^2} f(x,y) \, dx \, dy \\ \approx
    \frac{(2\pi)^2}{N_1 N_2} \sum_{n_1,n_2=0}^{N_1-1,N_2-1} f\paren{\frac{2\pi
n_1}{N_1},\frac{2\pi n_2}{N_2}}.
  \end{multlined}
\end{equation}
The analogue of Theorem \ref{thm:ptr} for $d > 1$ is discussed in Appendix
\ref{app:ptrtheory}. We again obtain exponential convergence, with rate
determined by the height of the strip of analyticity in each variable,
minimized over all other variables.

\subsection{Periodic trapezoidal rule for Brillouin zone integrals}
\label{subsec:ptrbz}

The PTR is therefore highly efficient for integrals of the form
\eqref{eq:bzint} when $\eta$ is sufficiently large. First, we can obtain the coefficients $H_{\vect{R}}$ in
\eqref{eq:fstrunc} by Fourier interpolation from samples $H(q_m)$ on a coarse, uniform grid of
$M^d$ points $q_m$. The values $H(q_m)$ are typically obtained from an ab
initio calculation of the band eigenvalues $\epsilon_n(k)$, transformed into
the optimized Wannier basis.
Then, using \eqref{eq:fstrunc},
$H(\vect{k})$ can be interpolated to a uniform integration
grid of $N^d$ points $k_n$ and stored.
Typically $H(\vect{k})$ is much smoother than the integrand in
\eqref{eq:bzint}, so $N \gg M$, i.e., the grid
required to accurately
represent $H(\vect{k})$ can be made much coarser than the BZ integration grid.
Finally, for each $\omega$, the
PTR can be applied to \eqref{eq:bzint}, requiring the
calculation of a matrix inverse at each of $N^d$ grid points.

We emphasize that $H(\vect{k})$ need only be computed once on the integration
grid, stored, and reused for each $\omega$. Since the
cost of evaluating $H(\vect{k})$ using \eqref{eq:fstrunc} may be comparable to or
greater than that of computing a small matrix inverse, this represents an 
advantage of the PTR over adaptive schemes. Indeed, any adaptive method
must evaluate $H(\vect{k})$
on the fly for each new $\omega$, since the structure of an adaptive
grid depends on the
level set
$\det\paren{\omega - H(\vect{k})} = 0$
(see Fig.~\ref{fig:structure}). The efficient evaluation of
\eqref{eq:fstrunc} for the PTR is discussed in Appendix \ref{app:ptrheval}. 

Although the IBZ is in general a complicated polytope (and in particular
not a periodic domain),
it can be shown that the PTR with $N_1 = N_2 = \cdots$ can be exactly
\textit{symmetrized} to a sum only over grid points within the IBZ, while maintaining
spectral accuracy.
Appendix \ref{app:ibz} discusses this point in detail.

\subsection{Automatic integration with the periodic trapezoidal rule}
\label{sec:autoptr}

The PTR can be used in a black-box algorithm to compute $G(\omega)$
in \eqref{eq:bzint} to a user-specified error tolerance $\varepsilon$.


\begin{algorithm}[H]
  \caption{Automatic PTR to compute $G(\omega)$}
  \label{alg:ptr}
  \textbf{Inputs:} $\eta > 0$, $\omega$, $\varepsilon > 0$, positive integers
  $\xtrial{N}$ and $\Delta N$, arrays $\xtrial{H}$ and $\xtest{H}$ of values of $H(\vect{k})$ on PTR grids of $N = \xtrial{N}$ and $N = \xtest{N} \equiv \xtrial{N} + \Delta N$ points per
  dimension, $H_{\vect{R}}$ \\
  \textbf{Output:} $G(\omega)$ correct to $\varepsilon$ accuracy, updated
  integers $\xtrial{N}$, $\xtest{N}$ and corresponding arrays $\xtrial{H}$,
  $\xtest{H}$
  \begin{algorithmic}
  \State \multiline{Using $\xtrial{H}$, $\xtest{H}$,
    approximate $G(\omega)$ by $\xtrial{N}^d$, $\xtest{N}^d$-\\point PTR to obtain $\xtrial{G}$, $\xtest{G}$}
  \State $e \gets \abs{\xtrial{G}-\xtest{G}}$
  \While{$e > \varepsilon$}
    \State $\xtrial{\{N,H,G\}} \gets \xtest{\{N,H,G\}}$
    \State $\xtest{N} \gets \xtrial{N} + \Delta N$
    \State Evaluate $H(\vect{k})$ on $\xtest{N}^d$ points to obtain $\xtest{H}$
    \State \multiline{Compute $G(\omega)$ by $\xtest{N}^d$-point PTR 
    to obtain $\xtest{G}$}
    \State $e \gets \abs{\xtrial{G}-\xtest{G}}$
  \EndWhile
  \State $G(\omega) \gets \xtest{G}$
  \end{algorithmic}
\end{algorithm}


This algorithm estimates the error using a self-consistency check between the
results calculated using the PTR with $\xtrial{N}$ and $\xtest{N} = \xtrial{N} +
\Delta N$ points, respectively. Although it is stated for a single evaluation
point $\omega$, in practice, 
given a collection of frequency points $\omega$,
we repeatedly apply Algorithm \ref{alg:ptr}, using the output values
$\xtrial{\{N,H\}}$ and $\xtest{\{N,H\}}$ from the previous frequency point as
inputs for the next one. In this way, as $G(\omega)$
is computed for each new $\omega$, the number of grid points is increased if
necessary to ensure self-consistency to within $\varepsilon$. Since evaluating
$H(\vect{k})$ is typically a computational bottleneck, reusing the arrays
$\xtrial{H}$ and $\xtest{H}$ corresponding to the largest values of $\xtrial{N}$
and $\xtest{N}$ used so far is typically efficient, even if a less dense grid is
required for good accuracy for a specific value of $\omega$.  A refinement of
the algorithm might maintain values of $H(\vect{k})$ on a collection of grids
with different numbers of points, and for each new $\omega$ attempt to use the
appropriate grid first. However, this is unlikely to provide a significant
improvement as long as the cost of evaluating $H(\vect{k})$ is dominant. Most
importantly, Algorithm \ref{alg:ptr} ensures that each value $G(\omega)$ is
computed using a grid sufficiently dense to yield error below $\varepsilon$.

We remark that if $\Delta N$ is chosen too small, it is possible for the
self-consistency check to provide an error estimate smaller than the true error,
yielding a result with error greater than $\varepsilon$. The selection of
$\Delta N$ depending on $\eta$ and the properties of $H$ is discussed in
Appendix \ref{app:ptrtheory}. We also note that to avoid taking many refinement
steps in each call of Algorithm \ref{alg:ptr}, the initial value of $\xtrial{N}$ should be
chosen proportional to $\eta^{-1}$. For the examples in Section
\ref{sec:results}, we take $\xtrial{N} = 6/\eta$.

We note that the procedure we have described allows the evaluation frequencies
$\omega$ to be determined on-the-fly, between subsequent calls to Algorithm
\ref{alg:ptr}. This is required for the adaptive frequency interpolation
algorithm described in \ref{sec:adapinterp}, in which the evaluation frequencies
are selected to discover and resolve localized features in the spectral
function.  In Appendix \ref{app:autoptralt}, we suggest an alternative algorithm
which may be advantageous in the alternative case that $G(\omega)$ is to be
evaluated at a given, fixed set of frequencies. 

\section{Adaptive integration preliminaries} \label{sec:adapintgr}

For sufficiently small $\eta$, resolving \eqref{eq:bzint} using the PTR is 
infeasible. We saw a 1D example of this in Fig.~\ref{fig:ptrex}, and refer again
to Fig.~\ref{fig:structure} for an example
of a highly localized integrand in the 2D case.
Adaptive integration 
enables the efficient and automatic evaluation of singular or
nearly-singular integrals with high-order accuracy.
Before discussing higher-dimensional adaptive integration methods in Section~\ref{sec:adapintnd}, we
first review Gauss quadrature for smooth non-periodic integrals in Section~\ref{sec:gq}, and its use  
in 1D high-order adaptive integration in Section~\ref{sec:adapint1d}.

\subsection{Gauss quadrature} \label{sec:gq}

The Gauss quadrature rule is given by 
\[\int_{a}^b f(x) \, dx \approx \sum_{j=1}^p f(x_j) w_j,\]
where $x_j$ and $w_j$ are the nodes and weights of the rule,
respectively. The rapid convergence properties of Gauss quadrature make it a standard
tool for the integration of smooth, non-periodic functions. In particular, it is the unique
$p$-node quadrature rule which integrates polynomials up to degree $2p-1$
exactly. The following theorem describes the error for
analytic functions \cite[Thm 4.5]{trefethen08}, \cite{rabinowitz69}, and
should be compared with Theorem \ref{thm:ptr}. We state the result for the
standard interval $a = -1$, $b = 1$.
\begin{theorem}
  If $f$ is analytic in the region bounded by
  the ellipse with foci $\pm 1$ and major and minor semiaxis lengths
  summing to $\rho > 1$, with $\abs{f(z)} \leq C$, then for any $p \geq 1$,
  \[\abs{\int_{-1}^1 f(x) \, dx - \sum_{j=1}^p f(x_j) w_j} \leq
  \frac{64C}{15\paren{1-\rho^{-2}} \rho^{2p}}.\]
\end{theorem}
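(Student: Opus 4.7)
The plan is to follow the classical argument based on Chebyshev expansions, as in the reference (Trefethen 2008). First I would expand $f$ in a Chebyshev series,
\[f(x) = \sum_{k=0}^\infty a_k T_k(x),\]
which converges absolutely and uniformly on $[-1,1]$ since $f$ is analytic in a neighborhood of this interval. The region bounded by the ellipse with foci $\pm 1$ and semiaxis sum $\rho$ is the Bernstein ellipse $E_\rho$, and the key preparatory estimate is the standard Bernstein-type bound on Chebyshev coefficients: if $f$ is analytic in $E_\rho$ with $\abs{f(z)} \leq M$ there, then $\abs{a_k} \leq 2 M \rho^{-k}$ for $k \geq 1$ and $\abs{a_0}\leq M$. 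I would derive this by writing $a_k$ as a contour integral along the image of a circle under the Joukowski map $z = (w + w^{-1})/2$, parametrizing by $w$ on the circle of radius $\rho$, and estimating the integrand.

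Next I would use the defining property of Gauss quadrature: the $p$-node rule is exact on all polynomials of degree $\leq 2p-1$, hence $E_p(T_k) := \int_{-1}^1 T_k\, dx - \sum_j w_j T_k(x_j) = 0$ for $k \leq 2p-1$. Applying this term by term to the Chebyshev expansion,
\[\int_{-1}^1 f(x)\,dx - \sum_{j=1}^p w_j f(x_j) = \sum_{k=2p}^\infty a_k \, E_p(T_k),\]
so only the tail of the Chebyshev series contributes. For each remaining term I would bound $\abs{E_p(T_k)}$ by a constant independent of $k$: the exact integral satisfies $\abs{\int_{-1}^1 T_k\, dx} \leq 2/(k^2-1) \leq 2$ for $k\geq 2$ and vanishes for odd $k$, while $\abs{\sum_j w_j T_k(x_j)} \leq \sum_j w_j = 2$ since $\abs{T_k} \leq 1$ on $[-1,1]$ and the Gauss weights are positive with total mass $2$.

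Combining these bounds,
\[\abs{E_p(f)} \leq \sum_{k=2p}^\infty \abs{a_k}\,\abs{E_p(T_k)} \leq C M \sum_{k=2p}^\infty \rho^{-k} = \frac{C M \rho^{-2p}}{1-\rho^{-1}}\]
for a modest absolute constant $C$, giving the qualitative form of the claim immediately.

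The main obstacle is sharpening this to the stated constant $64/(15(1-\rho^{-2}))$ rather than the looser $C/(1-\rho^{-1})$. This requires replacing the crude bound $\abs{E_p(T_k)}\leq 4$ by a more refined analysis that exploits two facts: (i) the vanishing of $\int_{-1}^1 T_k$ for odd $k$, which roughly halves the sum and converts the denominator $1-\rho^{-1}$ into $1-\rho^{-2}$; and (ii) symmetry and aliasing relations satisfied by $T_k$ at the Gauss nodes, allowing $\sum_j w_j T_k(x_j)$ to be replaced by a sharper kernel-type estimate. Tracking the worst-case contribution from the first active mode $k=2p$ and resumming yields the $64/15$ prefactor; this is the technically delicate portion of the argument, and is the step I would consult Trefethen 2008 for rather than redo from scratch.
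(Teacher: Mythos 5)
The paper does not prove this theorem: it quotes it verbatim from Trefethen's 2008 SIAM Review article (cited as Thm~4.5), just as it quotes Theorem~1 from Trefethen~2014. There is therefore no in-paper proof to compare against; the only original PTR-type analysis in the paper is the $d$-dimensional generalization in Appendix~A (Theorem~3), which concerns the trapezoidal rule, not Gauss quadrature.

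That said, your sketch is a sound and standard route to the qualitative estimate. The ingredients are all correct: the Bernstein-ellipse bound $|a_k|\le 2M\rho^{-k}$, the exactness of the $p$-node Gauss rule on polynomials of degree $\le 2p-1$, the positivity of the Gauss weights with total mass $2$, and the bound $\bigl|\int_{-1}^1 T_k\bigr| = 2/(k^2-1)$ for even $k\ge 2$ (zero for odd $k$). One refinement that you understate, and that cleanly produces the $1-\rho^{-2}$ rather than $1-\rho^{-1}$ denominator without any hand-waving: the Gauss nodes and weights on $[-1,1]$ are symmetric about the origin, so the quadrature sum $\sum_j w_j T_k(x_j)$ vanishes \emph{identically} for odd $k$ (since $T_k$ is an odd function), not merely the exact integral. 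Hence $E_p(T_k)=0$ for all odd $k$, the error sum runs only over even $k\ge 2p$, and $\sum_{k\ \mathrm{even}\ge 2p}\rho^{-k} = \rho^{-2p}/(1-\rho^{-2})$ falls out directly. With $|E_p(T_k)|\le 2 + 2/(k^2-1)\le 8/3$ for even $k\ge 2$, this gives a constant $2\cdot 8/3 = 16/3\approx 5.3$, which is already within about~25\% of the stated $64/15\approx 4.27$; extracting the exact $64/15$ does indeed require the sharper kernel estimate that you correctly defer to Trefethen's reference. In short: your approach is reasonable and essentially self-contained up to the precise constant, and the paper itself offers nothing to compare it to beyond the citation.
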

As for the PTR, we find exponential convergence for functions which
can be analytically continued to a neighborhood of the interval, with
rate depending on the distance to the closest singularity. We note, however,
that the PTR is superior for functions which are also periodic \cite{hale08}.

\subsection{High-order adaptive integration in 1D}
\label{sec:adapint1d}

Gauss quadrature can be used in a composite
manner by dividing an interval of integration into panels, and using
a $p$-node rule on each panel. The
convergence of such a composite rule is order $2p$, in the sense that if
each panel is split in half to form a new composite rule of the same
order $p$ with double the number of panels, the error decreases
asymptotically by a factor $2^{-2p}$~\cite[Sec.~5.2]{kahaner88}.
This approach allows for the construction of adaptive composite Gauss
quadrature rules, in which panels of different lengths are assembled to
resolve localized features of a function. An automatic and adaptive
algorithm yielding error below a user-specified tolerance $\varepsilon$
is given in Algorithm \ref{alg:adapgau}. We note that many variants
of 1D adaptive integration have been proposed, and several
software packages are available: we refer the reader
to \cite[Sec.~4.7]{press07},
\cite[Sec.~5.6]{kahaner88}, and \cite{piessens12,shampine08} for further
discussion.

\begin{algorithm}[H]
  \caption{Automatic adaptive Gauss quadrature for $f(x)$ on
  $[a,b]$}
  \label{alg:adapgau}
  \textbf{Inputs:} $f(x)$, $a$, $b$, $p$, $\varepsilon$ \\
  \textbf{Output:} An approximation of $I \equiv \int_a^b f(x) \, dx$
  \begin{algorithmic}
    \State Apply $p$-node Gauss quadrature to $f$ on $[a,b]$ to
    obtain $I_0$
    \State Set $c = (a+b)/2$
    \State Apply $p$-node Gauss quadrature to $f$ on $[a,c]$ and $[c,b]$
    to obtain $I_1$ and $I_2$, respectively
    \If{$\abs{I_0 - (I_1+I_2)} \leq \varepsilon$}
      \State Approximate $I$ by $I_1+I_2$
    \Else
      \State \multiline{Call this algorithm on $[a,c]$ and $[c,b]$ with tolerance
      $\varepsilon$; \\ add results to obtain approximation of $I$}
    \EndIf
    \State \textbf{Note:} Since $I_1$ and $I_2$ are used in the
    recursive call to the algorithm (for example $I_0$, as it appears in the
    call on $[a,c]$, is equal to $I_1$, as it appears in the original call on
    $[a,b]$), they can be passed down to avoid repeated work.
  \end{algorithmic}
\end{algorithm}

Algorithm \ref{alg:adapgau} constructs a binary tree of panels on $[a,b]$, and its approximation of $I$ is given by the
composite Gauss quadrature rule on the union of the
leaf-level panels. The 
panels automatically refine towards localized features of $f$. As a comparison
with the PTR, we consider the example
of Fig. \ref{fig:ptrex} with $\eta = 0.01$, for which the PTR of $N = 1000$ points
produces an error between $10^{-4}$ and $10^{-3}$. On the other hand, Algorithm
\ref{alg:adapgau} with $p = 4$ and $\varepsilon = 10^{-4}$ constructs an
adaptive quadrature grid of $256$ points
which produces an error less than $10^{-6}$. For $\eta = 10^{-4}$ with the same
parameters, it constructs a grid of $480$ points which produces an error less than $10^{-7}$.

We note that as for Algorithm \ref{alg:ptr}, it is possible for the
self-consistency check used in Algorithm \ref{alg:adapgau} to
underestimate the true error of $I_0$.
However, because of the high-order accuracy of Gauss quadrature, this self-consistency check tends to
yield an effective error estimate. In the convergence regime, it is
typically a significant overestimate of the error of
$I_1+I_2$, the value ultimately used to approximate the integral $I$.
We also note that since the error
in the approximation of $I$ is bounded by the sum of the errors made on
each leaf-level panel, Algorithm \ref{alg:adapgau} only guarantees $n_\text{pan}
\varepsilon$ rather than $\varepsilon$ error, where $n_\text{pan}$ is
the number of leaf-level panels and $\varepsilon$ bounds the error on
each panel. This can be straightforwardly remedied by using slightly more
sophisticated adaptivity criteria \cite{piessens12}, or
by adding an additional final step to Algorithm \ref{alg:adapgau}:
recompute the integral on all leaf-level panels with
$p$ increased until convergence below $\varepsilon$.

To resolve an isolated feature
which is smooth on the scale $\eta$, Algorithm \ref{alg:adapgau} will
refine panels
dyadically to yield a smallest panel of
width $\OO{\eta}$, giving only $\OO{\log (\eta^{-1})}$ panels (and
hence quadrature nodes) in total.

\section{Two approaches to adaptive Brillouin zone integration} \label{sec:adapintnd}

In this section, we compare two possible generalizations of Algorithm
\ref{alg:adapgau} to 2D and 3D, TAI and IAI, and argue
that the performance of IAI is superior for
integrals of the form \eqref{eq:bzint}.

\subsection{Tree-based adaptive integration}

Algorithm \ref{alg:adapgau} can be straightforwardly generalized to
integrals on square or cubic domains. Rather
than constructing a binary tree of panels and using a Gauss quadrature
rule, one builds a quadtree of squares in two dimensions, or an octree of
cubes in three dimensions, each using a Cartesian product of Gauss
quadrature rules. Several variants of this basic scheme have been proposed and
implemented~\cite{genz80,berntsen91,cubature,hcubature}.

This approach is incompatible with integration on the IBZ, which is a polygon in
2D and a
polytope in 3D. Instead, one can carry out adaptive
integration on a tree of triangles or tetrahedra, as in Refs.~\cite{henk01} and
\cite{assmann16}. In 3D, this requires
partitioning a polytope with a tetrahedral mesh, specifying a splitting
strategy which avoids high-aspect ratio tetrahedra, and building a
high-order quadrature rule for arbitrary tetrahedra. Refs.~\cite{henk01}
and \cite{assmann16}
describe meshing and splitting strategies, but use simple,
low-order quadrature rules.

A more fundamental issue limits the performance of
TAI for integrals of the form
\eqref{eq:bzint}.
Using the term
``cell'' to refer to squares/cubes/triangles/tetrahedra in a tree-based
scheme, we summarize the issue as follows (see the left panel of Fig.~\ref{fig:grids} for
an illustration).
The structure of the integrand necessitates resolving features
of width $\eta$ along the $(d-1)$-dimensional level set
$\det\paren{\omega-H(\vect{k})} = 0$
on which the integrand of \eqref{eq:bzint} with $\eta = 0^+$ diverges
(a curve in 2D and a surface in 3D).
Thus, any tree-based
adaptive scheme refines to cells of diameter $\OO{\eta}$ covering the level set.
Since the length or area of the level set is of unit
magnitude,
and dyadic refinement is also needed in its normal direction,
the total number of cells is $\OO{\log(\eta^{-1})/\eta^{d-1}}$ (in general, for
a level set of dimension $\wb{d}$, the number of cells is
$\OO{\log(\eta^{-1})/\eta^{\wb{d}}}$ regardless of $d$).

\begin{figure}
  \centering
  \includegraphics[width=0.9\linewidth]{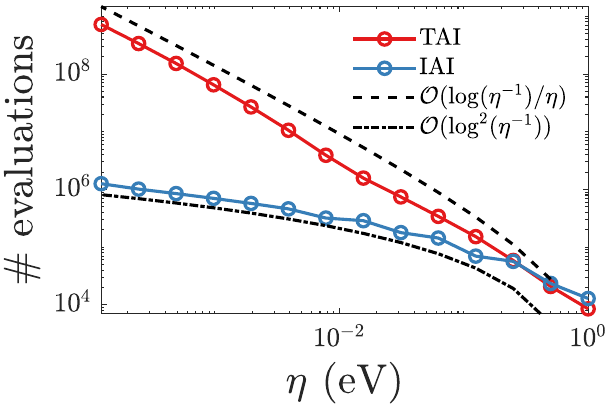}
  \caption{Number of evaluations of the integrand used by TAI and IAI versus $\eta$ for the
  2D example calculation described in
  Fig.~\ref{fig:grids}. Here, we use $\varepsilon = 10^{-5}$, and vary
  $\eta$.}
  \label{fig:taivsiai}
\end{figure}

We verify the
$\OO{\log(\eta^{-1})/\eta^{d-1}}$
scaling of TAI in Fig.~\ref{fig:taivsiai} for the
example shown in Fig.~\ref{fig:grids}. For this calculation, we use the
simple TAI scheme described at the beginning of this
section.
Although it improves on the $\OO{\eta^{-d}}$ scaling of PTR integration,
it is far from optimal, and in practice prevents TAI from accessing the
very small $\eta$ regime.

\subsection{Iterated adaptive integration} \label{sec:iai}

A $d$-dimensional integral can be rewritten as a nested sequence of
1D integrals, each of which can be computed using
adaptive integration. For example, in 2D,
\begin{gather*}
  \int_{a_1}^{b_1} \int_{a_2}^{b_2} dx \, dy \, f(x,y) =
\int_{a_1}^{b_1} dx \, I_2(x), \\
  \mbox{ where} \quad I_2(x) \equiv \int_{a_2}^{b_2} dy \, f(x,y).
\end{gather*}
The integral can therefore be computed by calling
Algorithm \ref{alg:adapgau} on the function $I_2$.
Within this procedure, $I_2(x)$ can be evaluated for fixed $x$ by
calling Algorithm \ref{alg:adapgau} on $f(x,y)$, with $y$
the variable of integration.

For $\bz = [-\pi,\pi]^2$, this IAI method computes \eqref{eq:bzint} as 
\begin{gather*}
  G(\omega) = \int_{-\pi}^{\pi} dk_x \, I_2(k_x), \quad \mbox{ where } \\
  I_2(k_x) = \int_{-\pi}^{\pi} dk_y \, \Tr
\brak{\paren{\omega-H(k_x,k_y) + i \eta}^{-1}}.
\end{gather*}
Fig.~\ref{fig:grids} illustrates the advantage of IAI.
In this case, $H$ is
scalar-valued, so the trace vanishes from the integral. The right panel
shows the inner integral $I_2(k_x)$.
It has only two localized features of width $\OO{\eta}$,
corresponding to the points at which the curve $H(k_x,k_y) = \omega$ has
a tangent aligned with the $k_y$ axis. Thus $I_2(k_x)$ can be obtained
using adaptive integration with a grid of $\OO{\log(\eta^{-1})}$
points (as discussed at the end of Section~\ref{sec:adapint1d}), shown on
the horizontal
axis. Each evaluation of $I_2(k_x)$ requires computing an
integral in $k_y$ with fixed $k_x$. This function also has
at most two localized features of width $\OO{\eta}$, and therefore can
also be adaptively integrated with a grid of
$\OO{\log(\eta^{-1})}$ points. Thus the cost of
computing the integral scales as $\OO{\log^2(\eta^{-1})}$. The resulting
IAI grid is shown in the middle panel.

This scaling is illustrated in Fig.~\ref{fig:taivsiai}.
For $\eta \approx 10^{-4}$, TAI requires over 500 times as many
integrand evaluations as IAI.  The same argument applies for
3D integrals, leading to $\OO{\log^3(\eta^{-1})}$
scaling, so that the advantage of IAI over TAI is even
more significant in this case.

\begin{table}
  \centering
  \setlength{\tabcolsep}{10pt}
  \renewcommand{\arraystretch}{1.5}
  \begin{tabular}{ |c|c|} 
   \hline
   Method & Complexity \\ \hline
    PTR & $\OO{\eta^{-d}}$ \\ 
    TAI & $\OO{\log(\eta^{-1})/\eta^{d-1}}$ \\
    IAI & $\OO{\log^d(\eta^{-1})}$ \\
   \hline
  \end{tabular}
  \caption{Asymptotic complexity as $\eta \to 0^+$ of the number of points required to
  compute $G(\omega)$ in \eqref{eq:bzint}, for fixed $\omega$, to
  a given accuracy, using the three methods described above.}
  \label{tab:scalings}
\end{table}

We summarize the asymptotic scalings of the three schemes we have
discussed in Table \ref{tab:scalings}. Of the two adaptive methods, IAI
is almost always preferable, and we therefore do not consider TAI in our
numerical tests.

\subsection*{Rapid evaluation of $H(\texorpdfstring{\vect{k}}{k})$ in iterated
adaptive integration}

Since the nearly-singular set
varies with $\omega$, adaptive methods must
evaluate $H(\vect{k})$ on the fly for each fixed $\omega$, a
computational bottleneck in certain cases. As discussed in Section~\ref{sec:ptr}, we evaluate $H(\vect{k})$ using its truncated Fourier series
representation \eqref{eq:fstrunc}.
This can be done efficiently in IAI by evaluating the series in a
sequential manner. We demonstrate this in the 2D case, writing $\vect{R} = (m,n)$
and $H_{mn} \equiv H_{\vect{R}}$ in \eqref{eq:fstrunc} to simplify notation:
\[H(k_x,k_y) = \frac{1}{(2\pi)^2} \sum_{m,n=-M/2}^{M/2-1}
e^{i(m k_x+n k_y)} H_{mn}.\]
At each step of the outer integration, $k_x$ is fixed, and
an inner integration in $k_y$ is carried out. Before the
inner integration, we can precompute and store
the $M$ numbers
\begin{equation} \label{eq:1dfscoeffs}
  H_n(k_x) \equiv \frac{1}{(2 \pi)^2} \!\!
\sum_{m=-M/2}^{M/2-1} e^{i m k_x} H_{mn}.
\end{equation}
Then $H(k_x,k_y)$ can be evaluated at $k_y$ using
\begin{equation} \label{eq:1dfs}
  H(k_x,k_y) \;= \sum_{n=-M/2}^{M/2-1} e^{i n k_y} H_n(k_x).
\end{equation}
In this way, the cost of evaluating the part of the Fourier series
corresponding to the outer integration variable is amortized over all inner
quadrature nodes, and becomes negligible.
The cost of evaluating
$H(\vect{k})$ in IAI is thus dominated by that of evaluating the 1D Fourier
series \eqref{eq:1dfs}.
This can be done efficiently by using
the simple recurrence $e^{\pm i n k_y} = e^{\pm i k_y} e^{\pm i (n-1)
k_y}$ to compute the exponentials.

\subsection*{Integration on the irreducible Brillouin zone}

Given an explicit Cartesian parameterization of the IBZ, IAI can
be used directly on the IBZ by making the straightforward domain replacement
\[\intbz dk_x \, dk_y \, dk_z \, \gets \int_{a_1}^{b_1} dk_x \,
\int_{a_2(k_x)}^{b_2(k_x)} dk_y \, \int_{a_3(k_x,k_y)}^{b_3(k_x,k_y)}
dk_z\]
in 3D, and similarly in 2D.
Since the IBZ is a convex polygon/polytope, $a_j$ and $b_j$
are piecewise affine. For example,
for the simple cubic BZ
$[-\pi,\pi]^3$, the IBZ is a tetrahedron
with $1/48$th of its volume given by $a_1 = 0$,
$b_1 = \pi$, $a_2(k_x) = k_x$, $b_2(k_x) = \pi$, $a_3(k_x,k_y) = 0$,
$b_3(k_x,k_y) = k_x$.

In practice, the IBZ may only be described by a collection of point
symmetries rather than an explicit parameterization.
Ref.~\cite{jorgensen22} describes an algorithm and
software package which determines the
IBZ as a convex hull characterized by its faces,
edges, and vertices. Given this convex hull, it is straightforward to
determine $a_1$ and $b_1$, and to evaluate the functions
$a_2$, $b_2$, $a_3$, and $b_3$.

We note that the presence of vertices in the IBZ can introduce isolated points
of non-differentiability in the integrand, which interferes with high-order
convergence. This is straightforwardly remedied by splitting the interval of
integration into subintervals determined by these points.

\section{Frequency domain adaptivity} \label{sec:adapinterp}

When $\eta$ is small, $G(\omega)$ itself may
develop localized features.
A common example is a Van Hove
singularity in the
$\eta\to0^+$ limit of $A(\omega)$, defined in \eqref{eq:dos}; when $\eta > 0$,
singularities become regularized on a length scale $\OO{\eta}$.
Automatic adaptive polynomial interpolation is a standard algorithm which can be
used to resolve such features, producing an approximation of
a function $f(x)$ on an interval $[a,b]$ accurate
to a user-specified error tolerance.
It operates in a similar
manner to Algorithm \ref{alg:adapgau}. A polynomial interpolant of
degree $p-1$ is built on $[a,b]$ from samples of $f(x)$ at $p$
nodes. The panel is accepted if the estimated error of the interpolant is below a user-specified
tolerance, and split otherwise. If the panel is split,
polynomial interpolants of degree $p-1$ are similarly constructed on
each of the two resulting panels, and the process is repeated. The result is a
piecewise degree $p-1$ polynomial interpolant on a collection of panels which are
automatically adapted to the localized features of $f(x)$. The
interpolant can be evaluated rapidly at any given point $x$ by descending
the binary tree produced by the procedure to
identify the leaf-level panel containing $x$, and then evaluating the
polynomial interpolant corresponding to that panel. The interpolant is uniformly
accurate to the specified tolerance.

A complete description of the procedure requires identifying specific
polynomial interpolation and error estimation methods. For the former, we use barycentric interpolation at
Chebyshev nodes, which is fast, stable, well-conditioned, and spectrally
accurate \cite{berrut04,trefethen19}.
There are several possible error estimation strategies. A fairly robust
approach is to compare an interpolant on $[a,b]$ to interpolants on
$[a,(a+b)/2]$ and $[(a+b)/2,b]$ on a dense grid of points, and estimate
the error as the maximum absolute discrepancy. A slightly less robust
but commonly used approach is to use the sum of the absolute values of
the last few coefficients of the Chebyshev expansion on a panel as an
error estimate. This method is often effective in practice, and requires
fewer evaluations of $f(x)$, so we use it for the calculations
carried out in the next section.

This automatic adaptive Chebyshev interpolation procedure can be applied
to obtain a piecewise polynomial representation of the spectral function
$A(\omega)$ on a given interval $[\omega_{\min},\omega_{\max}]$. The
interpolants on each panel are constructed from samples of $A(\omega)$
at particular values of $\omega$, and these samples can be obtained
using either the automatic PTR method described in
Section \ref{sec:autoptr}, or IAI.
To resolve localized features at a scale
$\OO{\eta}$, the adaptive
interpolant requires $\OO{\log (\eta^{-1})}$
evaluations of
$A(\omega)$, each of which requires computing a BZ integral. The expected
scaling of the spectral function calculation as
$\eta \to 0^+$ is therefore $\OO{\log(\eta^{-1})/\eta^3}$
using the automatic PTR method and $\OO{\log^4(\eta^{-1})}$
using IAI.

\section{Example: spectral function of
strontium vanadate} \label{sec:results}

We demonstrate the automatic IAI and PTR methods, as well as the frequency
adaptivity procedure, by calculating the
spectral function for 
the correlated metal strontium vanadate, SrVO$_3$.
Its cubic structure (with space group Pm$\bar3$m (221)) and the isolated
set of three $t_{2g}$-derived low-energy states results in a simple
tight-binding-like electronic structure of three degenerate orbitals,
yielding $3 \times 3$ matrices $H(\vect{k})$.
SrVO$_3$ is known to exhibit Fermi liquid behavior~\cite{Dougier/Fan/Goodenough:1975}, with the typical scattering rate scaling given in \eqref{eq:fermiliquid}.
For now, we consider a scattering rate $\eta$ which is constant
in frequency, but expect our
algorithms to
be particularly useful for frequency-dependent self-energies.

We construct the tight-binding Hamiltonian $H_{\vect{R}}$ in \eqref{eq:fstrunc} from
the low-lying $t_{2g}$ manifold by projecting the ab initio
Hamiltonian onto the three partially-filled vanadium $3d$ orbitals
using the \textsc{Wannier90} code~\cite{wannier90}. Using the planewave-based
\textsc{Quantum~ESPRESSO} package~\cite{Giannozzi_et_al:2009}, we
evaluate the ab initio
Hamiltonian at $M=10$ equispaced grid points per
dimension to obtain the samples $H(q_m)$ referred to in
Section~\ref{subsec:ptrbz}.
We use default parameters for the ground state calculation,
with lattice parameter $a = 3.859$ \AA, and the Perdew–Burke–Ernzerhof functional~\cite{Perdew/Burke/Ernzerhof:1996} with standard scalar-relativistic ultrasoft
pseudopotentials~\cite{Garrity_et_al:2014}.

We compute the spectral function as defined in \eqref{eq:dos} and
\eqref{eq:bzint} with $d = 3$. We carry out the integration over the IBZ, which
contains 48 point symmetries,
using the methods described in Appendix \ref{app:ibz} and Section~\ref{sec:iai} for
the PTR and IAI, respectively. We set $\varepsilon = 10^{-5}$
for both the
automatic PTR method and IAI, and use $p = 4$ Gauss quadrature nodes
per panel for IAI to obtain eighth-order accurate quadratures.
All calculations are performed on a single core of a workstation with an
Intel Xeon Gold 6244 processor and 256 GB of RAM.

\begin{figure}
  \centering
  \includegraphics[width=0.9\linewidth]{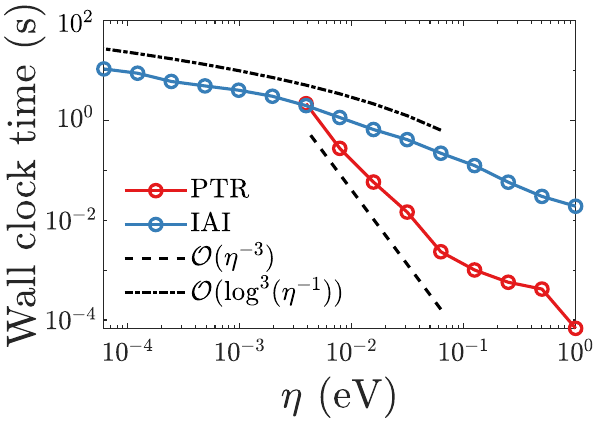}
  \caption{Wall clock time versus $\eta$ for the calculation of the spectral function $A(\omega)$ of SrVO$_3$ at $\omega = 0$ eV, comparing the automatic PTR and
  IAI methods. For the automatic PTR, we only include the costs of summation for the
  final values of $N = \xtrial{N}$ and $N = \xtest{N}$ chosen by the algorithm,
  since other costs, particularly that of evaluating $H(\vect{k})$ on the PTR
  grid, would be amortized over many frequencies in a calculation of the full
  spectral function.}
  \label{fig:srvo3times_onefreq}
\end{figure}

In Fig. \ref{fig:srvo3times_onefreq}, we show wall clock timings for the
calculation of the spectral function at the Fermi level $\omega =0$ eV,
for several values of $\eta$ down to the sub-meV scale. We observe the expected asymptotic
scalings. 
The automatic PTR method could not be used for $\eta \lesssim 4$ meV due to memory constraints, specifically that
of storing $H(\vect{k})$ on a fine equispaced grid.
$H(\vect{k})$ could be evaluated on the fly as in IAI, rather
than computing and storing it once, but when $A(\omega)$ must be
computed at more than one value of $\omega$, this
significantly increases the computational cost.
By contrast, IAI has modest memory
requirements which do not increase significantly as $\eta$ is decreased.

\begin{figure*}
  \centering
  \includegraphics[width=0.8\linewidth]{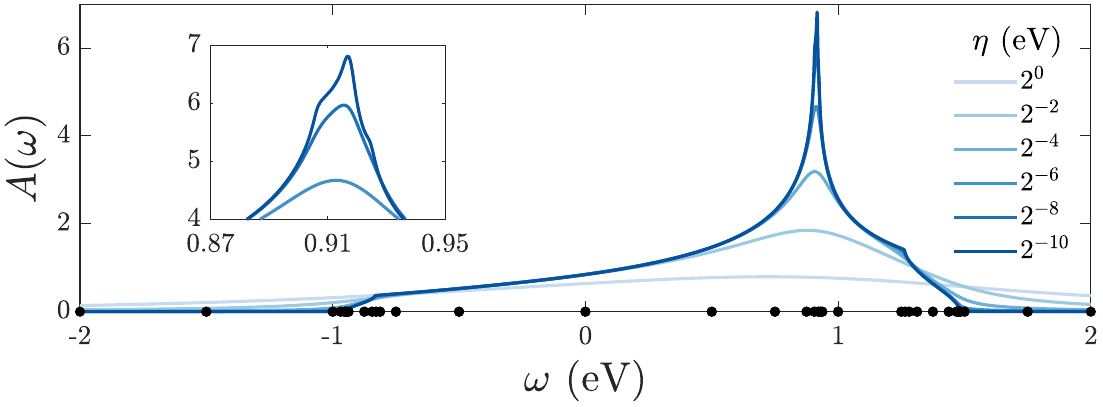}
  \caption{
    Spectral function $A(\omega)$ of SrVO$_3$ with values
  of $\eta$ varying from $1$ eV to $2^{-10}$ eV $\approx 1$ meV. The inset shows details of the region near an approximate
  singularity, regularized on the scale $\eta$.
  }
  \label{fig:dos_srvo3}
\end{figure*}

We next compute the full spectral function $A(\omega)$ using the adaptive
interpolation method described in Section~\ref{sec:adapinterp}. We use a
tolerance $10^{-4} = 10 \varepsilon$, and a Chebyshev
interpolant of $q = 16$ nodes on each panel.
The result is shown in Fig.~\ref{fig:dos_srvo3} for
several values of $\eta$, along with the final adaptive frequency grid
obtained by the automatic algorithm for $\eta \approx 1$
meV.
We observe multiple Van Hove band edges, 
as well as a
stronger feature at $\omega \approx 0.9$ eV corresponding to flat bands.
The adaptive interpolation algorithm refines into all localized features, as expected. The maximum error, measured against a converged reference
for each $\eta$ obtained by repeating all calculations with $\varepsilon
= 10^{-8}$, is at most $7.4 \times 10^{-5}$ using the automatic PTR
method and $4.7 \times 10^{-4}$ using the IAI method.

We plot wall clock timings for both methods in Fig.
\ref{fig:srvo3times}, and observe the expected scaling with $\eta$.
IAI is faster than the
automatic PTR method for $\eta \lesssim 15$ meV, and an optimal scheme
would switch from one method to the other at this point. 
Table \ref{tab:nptr} shows the number of PTR grid points per dimension
used by the automatic PTR algorithm to obtain the specified
accuracy for each $\eta$.
The automatic PTR
method cannot be used for $\eta \lesssim 4$ meV due to memory
constraints, and at $\eta \approx 4$ meV, the finest required grid uses $N =
2714$ points per dimension.
For $\eta \approx 1$ meV, computing $A(\omega)$ using IAI takes just under
25 minutes. Extrapolating from the data shown, and ignoring memory
limitations, the automatic PTR method would have taken
over $4.5$ days to complete the same calculation---over $250$ times as
long as IAI---and would have required over $N = 10\,000$ points per
dimension. Although the IAI calculation used a single core,
several straightforward options for parallelization are available.
For example, a simple parallelization over frequency points would yield this
result in a few minutes on a modern workstation.

\begin{figure}[b]
  \centering
  \includegraphics[width=0.9\linewidth]{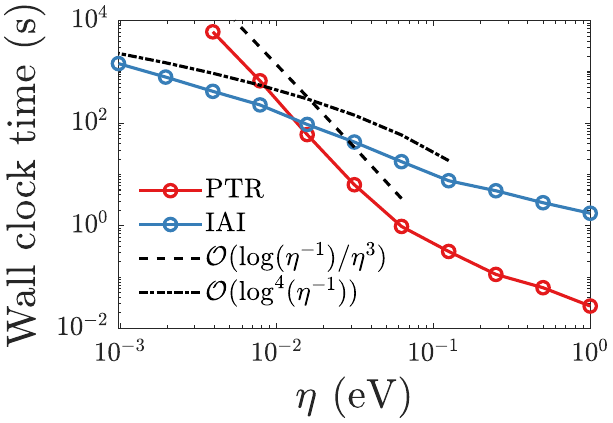}
  \caption{Wall clock time versus $\eta$ for the calculation of the spectral function $A(\omega)$ of SrVO$_3$, using
  adaptive frequency interpolation with both the automatic PTR
  method and IAI.}
  \label{fig:srvo3times}
\end{figure}

\begin{table}
  \centering
  \setlength{\tabcolsep}{4.0pt}
  \renewcommand{\arraystretch}{1.4}
  \begin{tabular}{|c|c|c|c|c|c|c|c|c|c|} 
   \hline
    $\eta$ (eV) & $2^0$ & $2^{-1}$ & $2^{-2}$ & $2^{-3}$ & $2^{-4}$ & $2^{-5}$ & $2^{-6}$ & $2^{-7}$ & $2^{-8}$ \\ \hline
    $N$    & 65 & 112 & 124 & 148 & 196 & 340 & 678 & 1356 & 2714 \\
   \hline
  \end{tabular}
  \caption{Number $N$ of PTR grid points per dimension used for the
  integration error tolerance
  $\varepsilon = 10^{-5}$
  to compute the spectral function of SrVO$_3$ shown in Fig.
  \ref{fig:dos_srvo3}.}
  \label{tab:nptr}
\end{table}

\section{Conclusion} \label{sec:conclusion}

We have presented automatic high-order accurate methods for BZ integration with positive broadening, and adaptive frequency sampling. Our algorithms are
straightforward to implement and deliver results accurate to a user-specified
error tolerance $\varepsilon$. For integration, we recommend the automatic PTR method
when $\eta$ is not too small, and IAI otherwise. Several
possible heuristics could be used to choose between the two methods based on
$\eta$ and $\varepsilon$, and we will investigate these in our
future work.
For adaptive
calculations, we have shown that IAI is preferable to TAI, because of its superior asymptotic scaling as $\eta \to 0^+$ and
algorithmic efficiency, as well as its simplicty of implementation for
IBZ calculations. In particular, for integrals of the form
\eqref{eq:bzint}, IAI has the mild asymptotic complexity
$\OO{\log^d(1/\eta)}$.

Although for concreteness we have restricted our discussion to BZ integrals for
Green's functions of the form \eqref{eq:bzint}, and have demonstrated
the performance of
our algorithms in calculations of the spectral function with constant,
diagonal
self-energy $\Sigma=-i \eta$, our framework is fairly general. In particular,
the case of Green's functions involving frequency-dependent and non-diagonal
self-energies is straightforward, and other possible applications include BZ integrals arising in
response functions. The generalization to momentum-dependent
self-energies $\Sigma(\vect{k},\omega)$ requires that $\Sigma$
can be evaluated rapidly at arbitrary points $(\vect{k},\omega)$. Although the given representation of $\Sigma$ may be
expensive to evaluate, it can be replaced by an approximant which can
be evaluated quickly (e.g., a piecewise polynomial interpolant) in a
precomputation. The proper approximant depends on the underlying
structure of $\Sigma$, and is therefore problem-dependent.

\acknowledgments

We thank J. Bonini, S. Ponc\'e, J.-M. Lihm, M. Ghim, C.-H. Park, A. Togo, A.
Hampel, C. Dreyer, A. Levitt, and E. Letournel for helpful discussions. We also
thank S. Tsirkin for pointing out the algorithm described in Appendix
\ref{app:autoptralt}, and for several other useful suggestions on the manuscript.
The Flatiron Institute is a division of the Simons Foundation. 

\appendix

\section{Convergence of the periodic trapezoidal rule for Brillouin zone
integrals} \label{app:ptrtheory}

In this Appendix, we develop convergence results for the PTR which extend the
discussion in Section~\ref{sec:ptr}.
Given analytic $H(\vect{k})$, we would like a practical
lower bound on $a$, the
convergence rate appearing in an error estimate $e_N \le Ce^{-aN}$
for the $d$-dimensional PTR with $N$ points per dimension (as in Theorem
\ref{thm:ptr} for 1D).
Such bounds can be used to guide the selection of $\Delta N$ in
Algorithm \ref{alg:ptr} in order to make the error estimate
as robust as possible.
Algorithm \ref{alg:ptr}
estimates the error of the $N^d$-point PTR as $\abs{e_N - e_{N+\Delta N}}$.
If $e_{N+\Delta N}$ is
smaller than $e_N$ by a factor $\gamma\gg1$,
then this error estimate is accurate with high probability. Thus we
must choose $\Delta N$ such that
\begin{equation} \label{eq:dnestimate}
  \gamma \leq e_N/e_{N + \Delta N} \approx e^{a \Delta N}
  \implies \Delta N \geq \log(\gamma)/a.
\end{equation}

To justify such an exponential error estimate,
we first
generalize Theorem \ref{thm:ptr} to $d>1$.
We then give a rigorous lower bound on its rate $a$, in terms of the width of
the strip of analyticity, in each dimension, of the integrand of
\eqref{eq:bzint}. The bound takes the form $a
\geq C \eta$, where $C$ depends on an upper bound of
$\nabla_{\vect{k}} H(\vect{k})$, in a norm to be specified for the case
of scalar-valued $H(\vect{k})$, and conjectured otherwise.
If this bound is on the order of 1,
then \eqref{eq:dnestimate} with $\gamma = 10$ gives $\Delta N \gtrsim
2.3/\eta$. We have checked that this is the case for the example
considered in Section~\ref{sec:results}, and therefore use $\Delta N =
2.3/\eta$ for our numerical experiments.

\subsection*{Generalization of Theorem \ref{thm:ptr} to dimension $d$}

For simplicity, we state the result for 2D, but the same technique
gives an analogous result in 3D.
\begin{theorem}\label{thm:ptr2d}
  Let $f(x,y)$ be doubly $2\pi$-periodic. Suppose
  that for each fixed $y=y_0$, $f(x,y_0)$ is analytic in the
  strip $\abs{\Im x} < a_1$, with $\abs{f(x,y_0)} \leq C_1$, and
  for each fixed $x=x_0$, $f(x_0,y)$ is analytic in the
  strip $\abs{\Im y} < a_2$, with $\abs{f(x_0,y)} \leq C_2$.
  Then for any $N_1,N_2 \geq 1$,
  \begin{widetext}
    \[\abs{\int_{[0,2\pi]^2} f(x,y) \, dx \, dy 
    - \frac{\paren{2\pi}^2}{N_1 N_2} \sum_{n_1,n_2=0}^{N_1-1,N_2-1}
   f\paren{\frac{2\pi
  n_1}{N_1},\frac{2\pi n_2}{N_2}}} \leq
    8 \pi^2 \paren{\frac{C_1}{e^{a_1 N_1}-1} + \frac{C_2}{e^{a_2 N_2}-1}}.\]
  \end{widetext}
\end{theorem}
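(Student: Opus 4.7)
The plan is to reduce to the 1D case (Theorem \ref{thm:ptr}) by a telescoping decomposition that applies the PTR one dimension at a time. Define three quantities: the exact integral $I = \int_{[0,2\pi]^2} f(x,y)\,dx\,dy$, the ``semi-discretized'' quantity
\[
Q_1 = \frac{2\pi}{N_1} \sum_{n_1=0}^{N_1-1} \int_0^{2\pi} f\!\left(\tfrac{2\pi n_1}{N_1}, y\right) dy,
\]
and the full 2D PTR $Q_2$. Write $I - Q_2 = (I - Q_1) + (Q_1 - Q_2)$ and bound each piece using Theorem \ref{thm:ptr} in the appropriate variable.

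For $I - Q_1$, fix $y$ and apply Theorem \ref{thm:ptr} to the map $x \mapsto f(x,y)$, which by hypothesis is $2\pi$-periodic, analytic in $|\Im x| < a_1$, and bounded by $B_1$. The 1D bound gives pointwise-in-$y$ error at most $4\pi B_1/(e^{a_1 N_1}-1)$. Integrating this uniform bound over $y\in[0,2\pi]$ yields
\[
|I - Q_1| \le \frac{8\pi^2 B_1}{e^{a_1 N_1}-1}.
\]
For $Q_1 - Q_2$, observe that $Q_2$ is obtained from $Q_1$ by replacing each inner $y$-integral $\int_0^{2\pi} f(2\pi n_1/N_1, y)\,dy$ by its $N_2$-point PTR. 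By hypothesis, for each fixed $x_0 = 2\pi n_1/N_1$, the function $y \mapsto f(x_0, y)$ is $2\pi$-periodic, analytic in $|\Im y| < a_2$, and bounded by $B_2$, so Theorem \ref{thm:ptr} gives error at most $4\pi B_2/(e^{a_2 N_2}-1)$ per term. Summing over the $N_1$ outer nodes with weight $2\pi/N_1$ produces
\[
|Q_1 - Q_2| \le \frac{8\pi^2 B_2}{e^{a_2 N_2}-1}.
\]
Adding the two bounds by the triangle inequality gives the claimed estimate.

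There is essentially no hard step: the only thing to be careful about is that the bound $B_i$ and the width $a_i$ from the hypothesis are stated as holding for \emph{each} slice, so the 1D bound applies uniformly along every slice, which is exactly what the telescoping argument requires. The same template extends to $d=3$ by decomposing $I - Q_3 = (I - Q_1) + (Q_1 - Q_2) + (Q_2 - Q_3)$, producing three terms of the form $8\pi^3 B_i/(e^{a_i N_i}-1)$.
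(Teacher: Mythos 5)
Your proof is correct and takes essentially the same route as the paper's: a telescoping decomposition that applies the 1D result one variable at a time, the only cosmetic difference being that you discretize in $x$ first while the paper discretizes in $y$ first. One small slip in your closing aside about $d=3$: the constant per term should be $2(2\pi)^3 B_i/(e^{a_i N_i}-1) = 16\pi^3 B_i/(e^{a_i N_i}-1)$, not $8\pi^3 B_i/(e^{a_i N_i}-1)$, since you integrate (or average) a pointwise 1D bound of $4\pi B_i/(e^{a_i N_i}-1)$ over a $(2\pi)^2$ area, consistent with the paper's general form $2d(2\pi)^d B/(e^{aN}-1)$.
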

\begin{proof}
  The triangle inequality gives
\begin{widetext}
  \begin{multline*}
    \abs{\int_{[0,2\pi]^2} dx \, dy \, f(x,y)
      - \frac{\paren{2\pi}^2}{N_1 N_2} \sum_{n_1,n_2=0}^{N_1-1,N_2-1}
       f\paren{\frac{2\pi
      n_1}{N_1},\frac{2\pi n_2}{N_2}}}
      \leq \abs{\int_0^{2\pi} dx \, \brak{\int_0^{2\pi} dy \, f(x,y) 
      - \frac{2\pi}{N_2} \sum_{n_2=0}^{N_2-1} f\paren{x,\frac{2\pi
      n_2}{N_2}}}} \\
      + \abs{\frac{2\pi}{N_2} \sum_{n_2=0}^{N_2-1} \int_0^{2\pi} dx \, f\paren{x,\frac{2\pi
      n_2}{N_2}} 
      - \frac{\paren{2\pi}^2}{N_1 N_2} \sum_{n_1,n_2=0}^{N_1-1,N_2-1}
      f\paren{\frac{2\pi n_1}{N_1},\frac{2\pi n_2}{N_2}}} \equiv T_1 +
      T_2.
  \end{multline*}
\end{widetext}
  Fixing $x$ in the inner integral of $T_1$ and applying Theorem~\ref{thm:ptr} gives
  \[T_1 \leq \int_0^{2\pi} dx \, \frac{4 \pi C_2}{e^{a_2 N_2}-1} =
  \frac{8 \pi^2 C_2}{e^{a_2 N_2}-1}\]
  for any $N_2 \geq 1$. We then have
\begin{widetext}
  \[T_2 = \abs{\frac{2\pi}{N_2} \sum_{n_2=0}^{N_2-1} \brak{\int_0^{2\pi} dx \, f\paren{x,\frac{2\pi
      n_2}{N_2}} 
      - \frac{2\pi}{N_1} \sum_{n_1=0}^{N_1-1}
      f\paren{\frac{2\pi n_1}{N_1},\frac{2\pi n_2}{N_2}}}} \leq 
      \frac{2\pi}{N_2} \sum_{n_2=0}^{N_2-1} \frac{4 \pi C_1}{e^{a_1
      N_1}-1} = \frac{8 \pi^2 C_1}{e^{a_1 N_1}-1}\]
\end{widetext}
  for any $N_1 \geq 1$, where we have fixed $y = \frac{2 \pi n_2}{N_2}$ in the inner integral and
  applied Theorem \ref{thm:ptr}.
\end{proof}
Setting $N_1 = N_2$ and taking $a = \min(a_1,a_2)$, $C = \max(C_1,C_2)$, we
obtain the error bound $\frac{16 \pi^2 C}{e^{a N}-1}$, which is of the
form required to justify the argument underlying
\eqref{eq:dnestimate}. In $d$ dimensions,
the bound is $\frac{2d\paren{2\pi}^d C}{e^{aN}-1}$.

\subsection*{Lower bound on the analytic strip width $a$}

We begin with the simplest case of \eqref{eq:bzint}: $d = 1$ with $H(k)$
scalar-valued. The following argument gives some intuition for the
discussion. Let $k_0\in\RR$ be such that $H(k_0)
= \omega$. If $\eta$ is small, we expect a pole of the integrand near $k_0$ in the complex
plane. A Taylor expansion about $k = k_0$ approximates the pole location
as
\[k = k_0 + \frac{i \eta}{H'(k_0)}.\]
We therefore expect that $\paren{\omega-H(k)+i\eta}^{-1}$ is
analytic in a strip $\abs{\Im k} < a$ of width
\begin{equation} \label{eq:a1dtaylor}
  a \gtrsim \frac{\eta}{\max_k \abs{H'(k)}}.
\end{equation}
In 2D, in order to estimate a lower bound on the analytic strip widths mentioned in Theorem \ref{thm:ptr2d}, we can repeat the argument with $k \gets k_x$ for each
fixed $k_y$, and similarly with $k \gets k_y$ for each fixed $k_x$. We
obtain
\[a_1 \gtrsim \frac{\eta}{\max_{\vect{k}}
\abs{\pd{H}{k_x}(k_x,k_y)}}, \quad a_2 \gtrsim \frac{\eta}{\max_{\vect{k}}
\abs{\pd{H}{k_y}(k_x,k_y)}}.\]
We can follow a similar procedure in 3D.

The following result clarifies this intuition more rigorously,
starting with 1D. We absorb
$\omega$ into $H$ as it does not affect the result.

\begin{lem}
  \label{lem:strip1d}
  Let $H$ be a $2\pi$-periodic real analytic
  function on $\RR$, and let $\eta > 0$.
  Suppose $a>0$ is such that
  i) $H$ can be analytically continued throughout
  the closed strip $|\Im z| \le a$, and
  ii) $\max_{|\Im z| \le a} |H'(z)| < \eta/a$.
  Then $(H(z) + i\eta)^{-1}$ is analytic and bounded in this closed strip.
\end{lem}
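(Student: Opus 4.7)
The plan is to show that, under the hypotheses, $H(z)+i\eta$ stays uniformly bounded away from $0$ throughout the closed strip $|\Im z| \le a$; analyticity of the reciprocal then follows automatically from analyticity of $H$, and the quantitative lower bound gives the boundedness claim.

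The key observation is that $H$ is real on the real axis (by the hypothesis that $H$ is real analytic on $\RR$). So for any $z = x + iy$ in the strip, $H(x) \in \RR$, and therefore $\Im H(z) = \Im\bigl(H(z) - H(x)\bigr)$. I would control $H(z) - H(x)$ by integrating $H'$ along the vertical segment from $x$ to $z$, using the fact that $H$ is holomorphic there:
\be
  H(z) - H(x) \;=\; i\int_0^y H'(x + it)\,dt.
\ee
Writing $M \equiv \max_{|\Im w|\le a} |H'(w)|$, hypothesis (ii) gives $M < \eta/a$, so the segment bound yields
\be
  |H(z) - H(x)| \;\le\; |y|\,M \;\le\; aM \;<\; \eta
\ee
for every $z$ with $|\Im z|\le a$.

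From this the rest follows in two short steps. First, $\Im(H(z)+i\eta) = \eta + \Im H(z) \ge \eta - |H(z)-H(x)| \ge \eta - aM > 0$, so $H(z)+i\eta$ never vanishes in the closed strip, which immediately makes $(H(z)+i\eta)^{-1}$ holomorphic there. Second, $|H(z)+i\eta| \ge |\Im(H(z)+i\eta)| \ge \eta - aM$, a strip-uniform positive lower bound, so
\be
  \bigl|(H(z)+i\eta)^{-1}\bigr| \;\le\; \frac{1}{\eta - aM},
\ee
which is the asserted boundedness.

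I do not expect a real obstacle here; the only subtlety is making sure the integral representation of $H(z)-H(x)$ is legitimate, which is immediate because the vertical segment from $x$ to $z$ lies inside the strip of analyticity of $H$ given in hypothesis (i). The reason hypothesis (ii) is sharp in spirit is that it forces the vertical excursion of $\Im H$ off the real axis to be strictly less than $\eta$, which is exactly what is needed to keep $H(z)+i\eta$ in the open upper half plane. This explains, via Theorem~\ref{thm:ptr}, the heuristic lower bound $a \gtrsim \eta/\max_k |H'(k)|$ quoted in \eqref{eq:a1dtaylor}.
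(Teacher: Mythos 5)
Your argument is correct and rests on the same key mechanism as the paper's proof: integrate $H'$ along the vertical segment from $\Re z$ to $z$, and use the fact that $H$ is real on the real axis to conclude that the imaginary part of $H(z)$ is at most $a\,\max|H'|<\eta$ in magnitude. The difference is one of framing. The paper argues by contradiction: it supposes a pole $z_p$ with $H(z_p)=-i\eta$ and shows the segment integral forces $\eta \le a\max|H'(z)|$, contradicting hypothesis (ii); it then asserts analyticity and boundedness. You instead argue directly, obtaining the uniform estimate $\Im\bigl(H(z)+i\eta\bigr)\ge \eta-aM>0$ and hence the explicit bound $\bigl|(H(z)+i\eta)^{-1}\bigr|\le(\eta-aM)^{-1}$. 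This direct version is slightly cleaner and has a concrete advantage: it actually \emph{proves} the boundedness claim in the lemma. The paper's contradiction argument only rules out poles; since the closed strip is unbounded, that alone does not yield boundedness (one would additionally need to invoke $2\pi$-periodicity to compactify). Your quantitative lower bound closes that small gap and is independent of periodicity.
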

\begin{proof}
  Since $H(z)$ is analytic in the strip $\abs{\Im z} \leq a$, $(H(z) +
  i\eta)^{-1}$ is also analytic
  there except possibly at isolated poles. Note that the case of
  constant $H\equiv-i\eta$ is not possible since $H$ is real on $\RR$.
  Suppose $z_p$ is such a pole, i.e.,
  $H(z_p)=-i\eta$ with $\abs{\Im z_p} \leq a$. Then
  \[-i\eta - H(x_p) = \int_{x_p}^{z_p} H'(z) \, dz\]
  with $x_p = \Re z_p$.
  Bounding the magnitudes of both sides gives
  $\eta \le |i\eta + H(x_p)| \le a \max_{|\Im z| \le a} |H'(z)|$,
  since $H(x_p) \in \RR$.
  This contradicts the hypothesis ii).
  Therefore $(H(z) + i\eta)^{-1}$ is analytic and bounded in the closed strip.
\end{proof}

This result can be used to obtain a practical lower bound for the rate
of exponential decay provided by Theorem \ref{thm:ptr} in the present
setting, in which $H(k)$ is a truncated Fourier series, and therefore
entire. We define $M(\alpha) \equiv \max_{\abs{\Im z} \leq \alpha}
\abs{H'(z)}$, a continuous, non-decreasing function of $\alpha \geq 0$ which
may be readily computed. Using the bisection algorithm, we can then find a
height $\alpha^\ast>0$ as large as possible such that $\alpha^\ast \,
M(\alpha^\ast) < \eta$.
The hypotheses of Lemma \ref{lem:strip1d} are satisfied with $a$
replaced by $\alpha^\ast$, so we find that the exponential rate of decay $a$
appearing in
Theorem \ref{thm:ptr} satisfies 
\[a \geq \frac{\eta}{M(\alpha^\ast)} = \frac{\eta}{\max_{\abs{\Im z}
\leq \alpha^\ast} \abs{H'(z)}}.\]
Since $M(\alpha^\ast) \to \max_k \abs{H'(k)}$ as $\eta \to 0^+$, we see that \eqref{eq:a1dtaylor}
is a reasonable approximation in this limit.

In 2D, we can define $M_1(\alpha_1) \equiv \max_{k_y} \max_{\abs{\Im z} \leq
\alpha_1} \abs{\pd{H}{k_x}(z,k_y)}$, and $\alpha_1^\ast > 0$ a height as large as
possible with $\alpha_1^\ast M_1(\alpha_1^\ast) < \eta$. $M_2(\alpha_2)$
and $\alpha_2^\ast$ can be defined similarly by switching the roles of $k_x$ and
$k_y$. We then obtain the lower bounds
\begin{align*}
  a_1 &\geq \frac{\eta}{M_1(\alpha_1^\ast)} = \frac{\eta}{\max_{k_y}
\max_{\abs{\Im z} \leq \alpha_1^\ast}
\abs{\pd{H}{k_x}(z,k_y)}}, \\ 
  a_2 &\geq \frac{\eta}{M_2(\alpha_2^\ast)} = \frac{\eta}{\max_{k_x}
\max_{\abs{\Im z} \leq \alpha_2^\ast}
\abs{\pd{H}{k_y}(k_x,z)}},
\end{align*}
which can then be inserted into Theorem \ref{thm:ptr2d}.

Rigorous convergence rate bounds
for the case of matrix-valued $H(\vect{k})$ are more challenging, and we
leave them for future work. For $d = 1$, we conjecture that the
arguments above will be valid with $\abs{H'(z)}$ replaced by
$\norm{H'(z)}_2$. The $d>1$ case might then
be treated in a similar manner as above.

\section{Symmetrized periodic trapezoidal rule in the irreducible Brillouin zone}
\label{app:ibz}

Let $S_1,\ldots,S_p$ be the orthogonal matrices representing the point
symmetries of the lattice, so that $H(S_j\vect{k}) = H(\vect{k})$. We define
$\ibz$, an irreducible wedge of the BZ, as the
closed set such that $\bz = \cup_{j=1}^p S_j \, \ibz$ and
$\cap_{j=1}^p \paren{S_j \, \ibz}^{\mathrm{o}} = \varnothing$. Denoting the integrand of
\eqref{eq:bzint} by $f\paren{H(\vect{k})}$, and making the
change of variables $\vect{k} \gets S_i \vect{k}$, we can
calculate \eqref{eq:bzint} as an integral over $\ibz$:
\begin{equation} \label{eq:ibzint}
\begin{multlined}
  \intbz d^d\vect{k} \, f\paren{H(\vect{k})} = \sum_{j=1}^p \int_{S_j
  \ibz} d^d \vect{k} \, f\paren{H(\vect{k})} \\
  = \sum_{j=1}^p \int_{\ibz} d^d \vect{k} \, f\paren{H(S_j \vect{k})}
  = p \int_{\ibz} d^d \vect{k} \, f\paren{H(\vect{k})}.
\end{multlined}
\end{equation}

The PTR cannot be applied directly to the integral over $\ibz$, since it is
not a regular domain on which $H(\vect{k})$ is periodic. However, under
some constraints, the $d$-dimensional PTR can be modified to take
advantage of the point symmetries. In
particular, if we take the
number of grid points in each dimension to be equal (e.g. $N_1 = N_2$
in \eqref{eq:ptr2d}), then the PTR sum can be converted exactly
into a sum over only the grid
points of the PTR lying in the IBZ, suitably reweighted.

Indeed, in this case, the PTR grid respects the point symmetries.
Let $S$ be a particular point symmetry.
Since $S$ maps any reciprocal lattice vector to another
reciprocal lattice vector, we have $SB = BM$ for some $M \in \ZZ^{d \times d}$.
Consider a point $\vect{g} = B\vect{n}/N$ of the $N^d$-point PTR lying
in $\ibz$, $\vect{n} \in \ZZ^d$ with $0 \leq n_1,\ldots,n_d \leq N-1$.
We have $S\vect{g} = BM\vect{n}/N \in S\ibz$, an integer linear
combination of the vectors $B/N$, i.e., a PTR grid point in $S\ibz$.
Therefore all PTR grid points in $\ibz$ map
to PTR grid points in $S \ibz$, and all PTR grid points in $S \ibz$ are the
image under $S$ of a grid point in $\ibz$, as can be seen by repeating
the argument on $S \ibz$ with the symmetry $S^{-1}$. As a consequence, one can
reduce the full PTR sum to a sum over the grid points $\vect{g} \in \ibz$, weighted
by $w_{\vect{g}} = \abs{\cup_{i=1}^p S_i \vect{g}}$, the number of
distinct images of $\vect{g}$ in the
BZ under the point symmetries. This number is equal to $p$ for interior
points of the IBZ, and less than $p$ for boundary points.

Letting $K = \{B\vect{n} \, | \, 0 \leq n_1,\ldots,n_d  \leq N-1,
B\vect{n} \in \ibz\}$ be the collection of PTR grid points in $W$ (or,
more generally, a minimal sufficient set of grid points which may not
all fall in a single irreducible wedge), 
we obtain an analogue of \eqref{eq:ibzint} for the $N^d$-point PTR: 
\[\frac{\abs{\det B}}{N^d} \sum_{n_1,\ldots,n_d=0}^{N-1} f\paren{H(B\vect{n})} =
\frac{\abs{\det B}}{N^d} \sum_{\vect{g} \in K} w_{\vect{g}}
f\paren{H(\vect{g})}.\]
The set $K$ and weights $w_{\vect{g}}$ can be computed by the following
algorithm, which does not require a parameterization of the IBZ, but
only the point symmetries. 
\begin{algorithm}[H]
  \caption{Determine grid points, weights of symmetrized
  $N^d$-point PTR}
  \label{alg:ptrsym}
  \textbf{Inputs:} $N$, $S_1,\ldots,S_p$ \\
  \textbf{Outputs:} $K$, $w_{\vect{g}}$ 
  \begin{algorithmic}
    \State Initialize $K$ as the full $N^d$-point PTR grid
    \Loop{ over remaining points $\vect{g} \in K$}
      \State $w_{\vect{g}} = 1$
      \For{$j=2,\ldots,p$}
        \If{$S_j \vect{g} \in K \wedge S_j \vect{g} \neq \vect{g}$}
          \State $K \gets K \, \backslash \, \{S_j \vect{g}\}$
          \State $w_{\vect{g}} \gets w_{\vect{g}} + 1$
        \EndIf
      \EndFor
    \EndLoop
    \State \textbf{Note:} We assume here that $S_1$ is the identity map.
  \end{algorithmic}
\end{algorithm}

In some cases, the PTR can be symmetrized under weaker
constraints than that considered here. For example, the 2D PTR respects
all symmetries of any non-square rectangular lattice even if a different number of grid points is chosen
in each dimension. It may also be possible to take advantage of partial
symmetries. For example, the PTR respects half of the symmetries of the square lattice if a different
number of grid points is used in each dimension, and can therefore be
symmetrized to a domain consisting of two irreducible wedges. We do not attempt to classify
all such possibilities.

\section{Efficient evaluation of \texorpdfstring{$H(\vect{k})$}{H(k)} for periodic trapezoidal
rule} \label{app:ptrheval}

In our PTR integration procedure, we compute $H(\vect{k})$ at all grid
points before summing.
This could be done using a zero-padded $N^d$-point FFT, but that is 
typically not the most efficient
approach for the present case in which $N \gg M$. Instead, we evaluate the Fourier series directly
using the iterated summation approach
summarized in \eqref{eq:1dfscoeffs} and \eqref{eq:1dfs}: we loop over grid points
dimension-by-dimension, precomputing Fourier series coefficients for fixed
components of $\vect{k}$ so that that dominant cost is evaluating a truncated
1D Fourier series of $M$ terms at each point. 
For a full BZ calculation, the total cost scales as $\OO{M N^d}$.
For the symmetrized PTR on the IBZ, described in Appendix \ref{app:ibz}, some care must be taken to implement iterated
summation only over the points in $K$. In particular, for each index
$n_i$ in an outer sum, one must keep track of the the range of indices
$n_{i+1},\ldots,n_d$ in each inner sum corresponding to points included
in $K$. The cost of the resulting algorithm scales as $\OO{M \abs{K}} =
\OO{M N^d/p}$. 

An alternative approach is the pruned FFT algorithm \cite{markel71}, described
in the context of BZ integration in Ref.~\cite{tsirkin21}. The pruned FFT, which
has $\OO{N^d \log M}$ complexity, reconstructs the $N^d$-point Fourier transform
from a collection of rephased $M^d$-point FFTs. When $M$ is small, it is not
clear whether or not this reduced complexity translates to a performance improvement
over direct evaluation of the Fourier series. We therefore defer a comparison
with the pruned FFT to future work.

\section{Alternative automatic PTR algorithm for fixed frequency grids} \label{app:autoptralt}

If a collection $\omega_j$ of frequencies at which to evaluate $G(\omega)$ is
specified in advance, there is an alternative to the automatic PTR algorithm
described in Section \ref{sec:autoptr} which has two advantages: (1) it avoids
storing $H(\vect{k})$ on a PTR grid, and (2) it avoids the possibility of computing
$G(\omega_j)$ on a denser PTR grid than is necessary for a given frequency
$\omega_j$. This procedure is described in Algorithm \ref{alg:ptralt}. We denote the integrand of \eqref{eq:bzint} by
$f(H(\vect{k}),\omega)$, and points in the PTR grid by $\vect{g}$. The
algorithm is simultaneously described for the unsymmetrized PTR for full BZ calculations,
and the symmetrized PTR described in Appendix \ref{app:ibz}, with the
option of using a symmetrized grid denoted by writing ``symmetrized'' in
parentheses. We denote the
weights in either PTR grid by $w_{\vect{g}}$. For the unsymmetrized $N^d$-point grid
on the full BZ $[-\pi,\pi]^d$, we have $w_{\vect{g}} =
(2\pi/N)^d$, and for the symmetrized PTR grid, the weights can be computed using
Algorithm \ref{alg:ptrsym}.

\begin{algorithm}[H]
  \caption{Automatic PTR to compute $G(\omega_j)$ at a collection of frequencies
  $\omega_j$}
  \label{alg:ptralt}
  \textbf{Inputs:} $\eta > 0$, $\varepsilon > 0$, positive integers
  $N$ and $\Delta N$, $n_\omega$ points $\omega_j$, $H_{\vect{R}}$ \\
  \textbf{Output:} $G(\omega_j)$, $j = 1,\ldots,n_\omega$, correct to $\varepsilon$ accuracy
  \begin{algorithmic}
    \State Initialize $J = \{1,\ldots,n_\omega\}$
    \State Initialize $\xtrial{G}(\omega_j) = 0$ for all
    $j \in J$
    \State Compute $w_{\vect{g}}$ for (symmetrized) $N^d$-point PTR grid
    \Loop{ over $\vect{g}$ in (symmetrized) $N^d$-point PTR grid}
      \State Evaluate $H(\vect{g})$
      \For{$j=1,\ldots,n_\omega$}
        \State $\xtrial{G}(\omega_j) \mathrel{{+}{=}} w_{\vect{g}}
        f(H(\vect{g}),\omega_j)$
      \EndFor
    \EndLoop
  \While{$J \neq \emptyset$}
    \State $N \gets N + \Delta N$
    \State Compute $w_{\vect{g}}$ for (symmetrized) $N^d$-point PTR grid
    \State Initialize $\xtest{G}(\omega_j) = 0$ for all $j \in J$
    \Loop{ over $\vect{g}$ in (symmetrized) $N^d$-point PTR grid}
      \State Evaluate $H(\vect{g})$
      \For{$j \in J$}
        \State $\xtest{G}(\omega_j) \mathrel{{+}{=}} w_{\vect{g}}
        f(H(\vect{g}),\omega_j)$
      \EndFor
    \EndLoop
    \For{$j \in J$}
      \State $e \gets \abs{\xtrial{G}(\omega_j)-\xtest{G}(\omega_j)}$
      \If{$e < \varepsilon$}
        \State Set $G(\omega_j) = \xtest{G}(\omega_j)$
        \State $J \gets J \, \backslash \,\{j\}$
      \Else
        \State $\xtrial{G}(\omega_j) \gets \xtest{G}(\omega_j)$
      \EndIf
    \EndFor
  \EndWhile
  \end{algorithmic}
\end{algorithm}

Algorithm \ref{alg:ptralt} cannot be used if the evaluation frequencies
$\omega_j$ are not specified in advance, as is the case when the adaptive
frequency interpolation scheme described in Section \ref{sec:adapinterp} is
used. Since use of an adaptive frequency grid gives a reduction in the number of
points required to resolve $G(\omega)$ from $\OO{\eta^{-1}}$ to $\OO{\log
(\eta^{-1})}$ compared with a uniform grid, the advantages provided by Algorithm
\ref{alg:ptralt} are typically insufficient to justify its use if it is necessary to fully
resolve $G(\omega)$, rather than evaluate it on a pre-specified grid.

\section{Linear tetrahedron method with \texorpdfstring{$\eta>0$}{positive
eta}} \label{app:tm}

The LTM is a popular BZ
integration method in electronic structure calculations, and we therefore
discuss it and compare it with our approach in this Appendix. We first
emphasize that the classical use case of the LTM is for BZ integrals
with integrands involving large Hamiltonian matrices which cannot be
efficiently evaluated on the fly using Wannier interpolation, and with
zero broadening $\eta = 0^+$. This is a
challenging setting, in which the LTM has significant
strengths, but is not the focus of this article. In the setting of 
a Wannier-interpolated Hamiltonian $H(\vect{k})$ that can be evaluated
efficiently on the fly, and a non-zero but
possibly small broadening $\eta > 0$ (or, more broadly, a self-energy),  more research is necessary
to understand how the LTM should be implemented, whether it is robust, and
what its asymptotic cost and performance are as $\eta \to 0^+$.

The idea of the LTM for integrals of the form \eqref{eq:bzint} can be
summarized as follows: the BZ is partitioned into tetrahedra of diameter
approximately $\Delta k$,
$H(\vect{k})$ is replaced on each tetrahedron by a linear
interpolant through its values at the vertices, and the resulting
integrals are computed analytically. If $H(\vect{k})$ is scalar-valued,
then the required analytical formula can be derived, even for $\eta =
0^+$, and if it is matrix-valued, then the scalar formula can be applied
to its diagonalization.

The LTM has the low-order accuracy $\OO{(\Delta
k)^2}$, a significant limitation to its use as a black-box method
to efficiently deliver a user-specified accuracy.
More importantly, in the case of
\eqref{eq:gfun} with a non-zero self-energy, diagonalizing
$H(\vect{k}) + \Sigma(\omega)$ yields complex eigenvalues, and it is
unclear how the LTM should be implemented in that case.
Ref.~\cite{haule10} describes the issue in terms of an ambiguity
in the ordering of complex eigenvalues in the vicinity of
band crossings, and proposes a heuristic algorithm to address the
problem, but the convergence
properties of the resulting scheme are not known.

Even if we focus on the case of \eqref{eq:bzint}
with scalar-valued $H(\vect{k})$, questions of robustness and efficiency remain. 
To illustrate one such concern, we show that in the classical case $\eta = 0^+$,
the LTM can yield a density of states with arbitrarily
large error at some $\omega$, no matter how small $\Delta k$ is, i.e.,
the LTM
does not converge uniformly. We will then return to the present case of
interest, $\eta > 0$. 

Consider the calculation of the density of states
\[A_0(\omega) = \intbz d^3 \vect{k} \, \delta\paren{\omega-H(\vect{k})},\]
the $\eta = 0^+$ limit of \eqref{eq:dos},
for the simple case $H(\vect{k}) = \cos(k_x) + \cos(k_y) + \cos(k_z)$.
$A_0(\omega)$ has a Van Hove singularity at $\omega_0 = 3$ due to
the extremum at $\vect{k} = \vect{0}$. It is straightforward to show
that $A_0(\omega) \sim \sqrt{\omega_0 - \omega}$ as $\omega \to
\omega_0^-$. For a fixed partition of $\bz$ by tetrahedra of
diameter $\Delta k$, consider $\omega$ sufficiently close to $\omega_0$
that
\[A_0(\omega) = \int_T d^3 \vect{k} \, \delta\paren{\omega-H(\vect{k})}\]
for a single tetrahedron $T$, i.e., no other tetrahedron contributes to
$A_0(\omega)$. Within $T$, the LTM makes the approximation $H(\vect{k}) \approx \vect{\nu} \cdot \vect{k} +
\alpha$. This yields 
\[A_0(\omega) \approx \int_T d^3 \vect{k} \,
\delta\paren{\omega-\paren{\vect{\nu} \cdot \vect{k} + \alpha}} =
\frac{s(\omega)}{\abs{\vect{\nu}}},\]
with
\[s(\omega) = \int_{\{\vect{k} \in T \, | \, \vect{\nu}\cdot\vect{k} + \alpha = \omega\}} dS(\vect{k})\]
the surface area of the plane
$\vect{\nu}\cdot\vect{k} + \alpha = \omega$ in $T$. If $\omega$ is
chosen so that $s(\omega) > 0$, then the resulting
approximation of $A_0(\omega)$, which should be $\sim
\sqrt{\omega_0-\omega}$, can be arbitrarily large, depending on how
small $\abs{\vect{\nu}}$ is. $\abs{\vect{\nu}}$ is a measure of
the flatness of the linear interpolant of $H(\vect{k})$ through the
vertices of $T$, and
therefore depends sensitively on the specific location of $T$
relative to $\vect{k}=\vect{0}$.
By simple translation of the mesh, one can arrange $\abs{\vect{\nu}}$ to be
arbitrarily small, leading to an arbitrarily large spurious peak in
$A_0(\omega)$. This phenomenon may be encountered in the vicinity
of any band extremum, and the tetrahedral mesh cannot in general
be arranged to avoid it without significant customization of
the mesh for each given band structure. Although the range in $\omega$
over which such a problem is encountered decreases as $\Delta k$
decreases, the expected magnitude of the error increases, since
$H(\vect{k})$ becomes flatter in the tetrahedron containing the band
edge. The problem persists no matter how small $\Delta k$ is taken.

For the $\eta = 0^+$ case, the LTM cannot therefore be used to build a
robust and automatic BZ integration algorithm providing results to a
user-specified accuracy without correcting failure modes of this nature.
When $\eta > 0$, the analysis is less straightforward. To the authors'
knowledge, it has not been determined in the literature whether or not
the problem is eliminated in this setting, and more generally whether or
not the LTM is convergent.  Even assuming it is convergent, the
computational complexity of the method as $\eta \to 0^+$ is not obvious.
This question merits further investigation, but our preliminary
experiments indicate that the computational effort required to achieve a
given accuracy is not uniform in $\eta$, and varies significantly with
the choice of $\omega$.

\bibliographystyle{ieeetr}
\bibliography{autobz}

\begin{thebibliography}{10}

\bibitem{Kratzer/Neugebauer:2019}
P.~Kratzer and J.~Neugebauer, ``The basics of electronic structure theory for
  periodic systems,'' {\em Frontiers in Chemistry}, vol.~7, 2019.

\bibitem{Kohn:1999}
W.~Kohn, ``Nobel lecture: {E}lectronic structure of matter---wave functions and
  density functionals,'' {\em Rev. Mod. Phys.}, vol.~71, pp.~1253--1266, 1999.

\bibitem{Georges:1996}
A.~Georges, G.~Kotliar, W.~Krauth, and M.~J. Rozenberg, ``Dynamical mean-field
  theory of strongly correlated fermion systems and the limit of infinite
  dimensions,'' {\em Rev. Mod. Phys.}, vol.~68, pp.~13--125, 1996.

\bibitem{Berthod_et_al:2013}
C.~Berthod, J.~Mravlje, X.~Deng, R.~\v{Z}itko, D.~van~der Marel, and
  A.~Georges, ``Non-{D}rude universal scaling laws for the optical response of
  local {F}ermi liquids,'' {\em Phys. Rev. B}, vol.~87, p.~115109, 2013.

\bibitem{Marzari/Vanderbilt:1997}
N.~Marzari and D.~Vanderbilt, ``Maximally localized generalized {W}annier
  functions for composite energy bands,'' {\em Phys. Rev. B}, vol.~56,
  pp.~12847--12865, 1997.

\bibitem{Souza/Marzari/Vanderbilt:2001}
I.~Souza, N.~Marzari, and D.~Vanderbilt, ``Maximally localized {W}annier
  functions for entangled energy bands,'' {\em Phys. Rev. B}, vol.~65,
  p.~035109, 2001.

\bibitem{wannier90}
A.~A. Mostofi, J.~R. Yates, Y.-S. Lee, I.~Souza, D.~Vanderbilt, and N.~Marzari,
  ``wannier90: A tool for obtaining maximally-localised {W}annier functions,''
  {\em Comput. Phys. Commun.}, vol.~178, no.~9, pp.~685 -- 699, 2008.

\bibitem{Yates_et_al:2007}
J.~R. Yates, X.~Wang, D.~Vanderbilt, and I.~Souza, ``Spectral and {F}ermi
  surface properties from {W}annier interpolation,'' {\em Phys. Rev. B},
  vol.~75, p.~195121, 2007.

\bibitem{Monkhorst/Pack:1976}
H.~J. Monkhorst and J.~D. Pack, ``Special points for {B}rillouin-zone
  integrations,'' {\em Phys. Rev. B}, vol.~13, pp.~5188--5192, 1976.

\bibitem{jepson71}
O.~Jepson and O.~Anderson, ``The electronic structure of h.c.p. {Y}tterbium,''
  {\em Solid State Commun.}, vol.~9, no.~20, pp.~1763--1767, 1971.

\bibitem{lehmann72}
G.~Lehmann and M.~Taut, ``On the numerical calculation of the density of states
  and related properties,'' {\em Phys. Status Solidi B}, vol.~54, no.~2,
  pp.~469--477, 1972.

\bibitem{blochl94}
P.~E. Bl\"ochl, O.~Jepsen, and O.~K. Andersen, ``Improved tetrahedron method
  for {B}rillouin-zone integrations,'' {\em Phys. Rev. B}, vol.~49,
  pp.~16223--16233, 1994.

\bibitem{kawamura14}
M.~Kawamura, Y.~Gohda, and S.~Tsuneyuki, ``Improved tetrahedron method for the
  {B}rillouin-zone integration applicable to response functions,'' {\em Phys.
  Rev. B}, vol.~89, p.~094515, 2014.

\bibitem{ghim22}
M.~Ghim and C.-H. Park, ``Converging tetrahedron method calculations for the
  nondissipative parts of spectral functions,'' {\em Phys. Rev. B}, vol.~106,
  p.~075126, 2022.

\bibitem{haule10}
K.~Haule, C.-H. Yee, and K.~Kim, ``Dynamical mean-field theory within the
  full-potential methods: Electronic structure of {CeIrIn$_{5}$},
  {CeCoIn$_{5}$}, and {CeRhIn$_{5}$},'' {\em Phys. Rev. B}, vol.~81, p.~195107,
  2010.

\bibitem{ho82}
K.~M. Ho, C.~L. Fu, B.~N. Harmon, W.~Weber, and D.~R. Hamann, ``Vibrational
  frequencies and structural properties of transition metals via total-energy
  calculations,'' {\em Phys. Rev. Lett.}, vol.~49, pp.~673--676, 1982.

\bibitem{methfessel89}
M.~Methfessel and A.~T. Paxton, ``High-precision sampling for {B}rillouin-zone
  integration in metals,'' {\em Phys. Rev. B}, vol.~40, pp.~3616--3621, 1989.

\bibitem{bjorkman11}
T.~Bj{\"o}rkman and O.~Gr\r{a}n{\"a}s, ``Adaptive smearing for {B}rillouin zone
  integration,'' {\em Int. J. Quantum Chem.}, vol.~111, no.~5, pp.~1025--1030,
  2011.

\bibitem{wang21}
Y.~Wang, P.~Wisesa, A.~Balasubramanian, S.~Dwaraknath, and T.~Mueller, ``Rapid
  generation of optimal generalized {M}onkhorst-{P}ack grids,'' {\em Comput.
  Mater. Sci.}, vol.~187, p.~110100, 2021.

\bibitem{chen22}
S.~Chen, P.~T. Salzbrenner, and B.~Monserrat, ``Nonuniform grids for
  {B}rillouin zone integration and interpolation,'' {\em Phys. Rev. B},
  vol.~106, p.~155102, 2022.

\bibitem{duchemin22}
I.~Duchemin, L.~Genovese, E.~Letournel, A.~Levitt, and S.~Ruget, ``Efficient
  extraction of resonant states in systems with defects,'' 2022.
\newblock arXiv:2204.07063.

\bibitem{henk01}
J.~Henk, ``Integration over two-dimensional {B}rillouin zones by adaptive mesh
  refinement,'' {\em Phys. Rev. B}, vol.~64, p.~035412, 2001.

\bibitem{assmann16}
E.~Assmann, P.~Wissgott, J.~Kune{\v s}, A.~Toschi, P.~Blaha, and K.~Held,
  ``woptic: Optical conductivity with {W}annier functions and adaptive k-mesh
  refinement,'' {\em Comput. Phys. Commun.}, vol.~202, pp.~1--11, 2016.

\bibitem{bruno97}
E.~Bruno and B.~Ginatempo, ``Algorithms for {K}orringa-{K}ohn-{R}ostoker
  electronic structure calculations in any {B}ravais lattice,'' {\em Phys. Rev.
  B}, vol.~55, pp.~12946--12955, 1997.

\bibitem{kaxiras03}
E.~Kaxiras, {\em Atomic and Electronic Structure of Solids}.
\newblock Cambridge University Press, 2003.

\bibitem{trefethen14}
L.~N. Trefethen and J.~A.~C. Weideman, ``The exponentially convergent
  trapezoidal rule,'' {\em SIAM Rev.}, vol.~56, no.~3, pp.~385--458, 2014.

\bibitem{Davis59}
P.~J. Davis, ``On the numerical integration of periodic analytic functions,''
  in {\em Proceedings of a Symposium on Numerical Approximations} (R.~E.
  Langer, ed.), University of Wisconsin Press, 1959.

\bibitem{trefethen08}
L.~N. Trefethen, ``Is {G}auss quadrature better than {C}lenshaw-{C}urtis?,''
  {\em SIAM Rev.}, vol.~50, no.~1, pp.~67--87, 2008.

\bibitem{rabinowitz69}
P.~Rabinowitz, ``Rough and ready error estimates in {G}aussian integration of
  analytic functions,'' {\em Commun. ACM}, vol.~12, no.~5, p.~268–270, 1969.

\bibitem{hale08}
N.~Hale and L.~N. Trefethen, ``New quadrature formulas from conformal maps,''
  {\em SIAM J. Numer. Anal.}, vol.~46, no.~2, pp.~930--948, 2008.

\bibitem{kahaner88}
D.~Kahaner, C.~Moler, S.~Nash, G.~Forsythe, S.~Nash, and M.~Malcolm, {\em
  Numerical Methods and Software}.
\newblock Prentice Hall, 1988.

\bibitem{press07}
W.~Press, S.~Teukolsky, W.~Vetterling, and B.~Flannery, {\em Numerical Recipes:
  The Art of Scientific Computing}.
\newblock Cambridge University Press, 3~ed., 2007.

\bibitem{piessens12}
R.~Piessens, E.~de~Doncker-Kapenga, C.~{\"U}berhuber, and D.~Kahaner, {\em
  {QUADPACK}: A Subroutine Package for Automatic Integration}.
\newblock Springer Series in Computational Mathematics, Springer Berlin
  Heidelberg, 2012.

\bibitem{shampine08}
L.~Shampine, ``Vectorized adaptive quadrature in {MATLAB},'' {\em J. Comput.
  Appl. Math.}, vol.~211, no.~2, pp.~131--140, 2008.

\bibitem{genz80}
A.~Genz and A.~Malik, ``Remarks on algorithm 006: {A}n adaptive algorithm for
  numerical integration over an {N}-dimensional rectangular region,'' {\em J.
  Comput. Appl. Math.}, vol.~6, no.~4, pp.~295--302, 1980.

\bibitem{berntsen91}
J.~Berntsen, T.~O. Espelid, and A.~Genz, ``An adaptive algorithm for the
  approximate calculation of multiple integrals,'' {\em ACM Trans. Math.
  Softw.}, vol.~17, no.~4, p.~437–451, 1991.

\bibitem{cubature}
S.~G. Johnson, ``{C}ubature v1.0.4.''
  \url{https://github.com/stevengj/cubature}, 2020.

\bibitem{hcubature}
S.~G. Johnson, ``{HC}ubature v1.5.0.''
  \url{https://github.com/JuliaMath/HCubature.jl}, 2020.

\bibitem{jorgensen22}
J.~J. Jorgensen, J.~E. Christensen, T.~J. Jarvis, and G.~L.~W. Hart, ``A
  general algorithm for calculating irreducible {B}rillouin zones,'' {\em
  Commun. Comput. Phys.}, vol.~31, no.~2, pp.~495--515, 2022.

\bibitem{berrut04}
J.-P. Berrut and L.~N. Trefethen, ``Barycentric {L}agrange interpolation,''
  {\em SIAM Rev.}, vol.~46, no.~3, pp.~501--517, 2004.

\bibitem{trefethen19}
L.~N. Trefethen, {\em Approximation theory and approximation practice},
  vol.~164.
\newblock Philadelphia, PA: SIAM, 2019.

\bibitem{Dougier/Fan/Goodenough:1975}
P.~Dougier, J.~C. Fan, and J.~B. Goodenough, ``Etude des proprietes
  magnetiques, electriques et optiques des phases de structure perovskite
  {SrVO$_{2.90}$} et {SrVO$_3$},'' {\em J. Solid State Chem.}, vol.~14, no.~3,
  pp.~247--259, 1975.

\bibitem{Giannozzi_et_al:2009}
P.~Giannozzi, S.~Baroni, N.~Bonini, M.~Calandra, R.~Car, C.~Cavazzoni,
  D.~Ceresoli, G.~L. Chiarotti, M.~Cococcioni, I.~Dabo, A.~D. Corso,
  S.~de~Gironcoli, S.~Fabris, G.~Fratesi, R.~Gebauer, U.~Gerstmann,
  C.~Gougoussis, A.~Kokalj, M.~Lazzeri, L.~Martin-Samos, N.~Marzari, F.~Mauri,
  R.~Mazzarello, S.~Paolini, A.~Pasquarello, L.~Paulatto, C.~Sbraccia,
  S.~Scandalo, G.~Sclauzero, A.~P. Seitsonen, A.~Smogunov, P.~Umari, and
  R.~Wentzcovitch, ``{Quantum ESPRESSO}: a modular and open-source software
  project for quantum simulations of materials,'' {\em J. Phys. Condens.
  Matter}, vol.~21, p.~395502, 2009.

\bibitem{Perdew/Burke/Ernzerhof:1996}
J.~P. Perdew, K.~Burke, and M.~Ernzerhof, ``Generalized gradient approximation
  made simple,'' {\em Phys. Rev. Lett.}, vol.~77, pp.~3865--3868, 1996.

\bibitem{Garrity_et_al:2014}
K.~F. Garrity, J.~W. Bennett, K.~M. Rabe, and D.~Vanderbilt, ``Pseudopotentials
  for high-throughput {DFT} calculations,'' {\em Comput. Mater. Sci.}, vol.~81,
  pp.~446--452, 2014.

\bibitem{markel71}
J.~Markel, ``{FFT} pruning,'' {\em IEEE Trans. Audio Electroacoust.}, vol.~19,
  no.~4, pp.~305--311, 1971.

\bibitem{tsirkin21}
S.~S. Tsirkin, ``High performance {W}annier interpolation of {B}erry curvature
  and related quantities with {WannierBerri} code,'' {\em Npj Comput. Mater.},
  vol.~7, no.~1, p.~33, 2021.

\end{thebibliography}
\end{document}